\documentclass[aos]{imsart}

\RequirePackage{amsthm,amsmath,amsfonts,amssymb}
\RequirePackage[numbers,sort&compress]{natbib}
\RequirePackage[colorlinks,citecolor=blue,urlcolor=blue]{hyperref}
\RequirePackage{graphicx, tikz, subcaption, dsfont}
\usetikzlibrary{graphs, positioning, calc}

\startlocaldefs
\theoremstyle{plain}

\newtheorem{theorem}{Theorem}[section]

\newtheorem{proposition}{Proposition}

\theoremstyle{remark}

\newtheorem*{example}{Example}

\newtheorem{assumption}{Assumption}

\endlocaldefs

\usepackage{xcolor}

\usepackage[normalem]{ulem}
\newcommand{\stkout}[1]{\ifmmode\text{\sout{\ensuremath{#1}}}\else\sout{#1}\fi}

\begin{document}

\begin{frontmatter}
\title{Proximal Causal Inference For Hidden Outcomes}
\runtitle{Proximal Causal Inference for Hidden Outcomes}

\begin{aug}
\author[A]{\fnms{Helen}~\snm{Guo}\ead[label=e1]{hguo51@jh.edu}},
\author[B]{\fnms{AmirEmad}~\snm{Ghassami}\ead[label=e3]{ghassami@bu.edu}},
\author[C]{\fnms{Ilya}~\snm{Shpitser}\ead[label=e4]{ilyas@cs.jhu.edu}},
\and
\author[A]{\fnms{Elizabeth L.}~\snm{Ogburn}\ead[label=e2]{eogburn@jhu.edu}}

\address[A]{Department of Biostatistics,
Johns Hopkins Bloomberg School of Public Health \printead[presep={,\ }]{e1,e2}}

\address[B]{Department of Mathematics and Statistics,
Boston University\printead[presep={,\ }]{e3}}

\address[C]{Department of Computer Science,
Johns Hopkins Whiting School of Engineering \printead[presep={,\ }]{e4}}

\end{aug}

\begin{abstract}
Methods that rely on proxies, without imposing strong parametric structure, are increasingly used to deal with unobserved variables in causal inference. One influential line of this work reconstructs latent distributions used to identify the target functional by exploiting eigenvalue-eigenvector structure. Within this framework, we first establish identification of the full-data law in the presence of hidden outcomes, and then develop influence function-based estimators for causal effects. To the best of our knowledge, this is the first work to develop influence function-based estimators in this setting without relying on unbiased proxy measurements or partial observation, yielding examples that are multiply robust and for which product-bias conditions for \(\sqrt{n}\)-consistent estimation can be derived. We demonstrate the performance of our approach through simulation studies.
\end{abstract}

\end{frontmatter}

\section{Introduction}\label{sec:1} 
Hidden variables remain a central challenge in causal inference, often preventing identification and estimation of causal targets of interest. Proxy-based approaches, which leverage key observed variables associated with the hidden variable(s), informally referred to as ``proxies,'' have recently emerged as a way to circumvent such hidden structure without relying on strong parametric assumptions \citep{kuroki14measurement, allman2009identifiability, miao2018identifying, ghassami24causal, zhou2024causalinferencehiddentreatment}. A prominent line of this work achieves identification by reconstructing latent distributions from observed data under relatively mild conditions, exploiting eigenvalue–eigenvector structure \citep{kuroki14measurement, allman2009identifiability, zhou2024causalinferencehiddentreatment}.

These methods are often described as “nonparametric,” in the sense that their assumptions are sufficiently weak; under relative support size/dimensionality conditions, distributions that violate them can be approximated arbitrarily closely by distributions that satisfy them \citep{semiparametric2020cui, Canay_testability2013}. Conceptually, this line of work builds on ideas dating back to Kruskal’s uniqueness theorem for the Candecomp/Parafac (CP) decomposition \citep{kruskal1977three}, which characterizes when a three-way array admits a unique representation as a sum of rank-one components. This result allows recovery of the joint distribution of hidden and observed variables (i.e., the full-data law).

Using this framework, we establish identification of the full-data law in the presence of proxies for hidden outcomes and develop an influence function-based approach for estimating causal effects. Our influence function-based approach yields examples of estimators that are \emph{multiply robust} (consistent even if certain nuisance functions are misspecified) and achieve \(\sqrt{n}\)-consistent estimation under product-bias conditions on nuisance estimation errors.

A substantial body of literature has studied effect identification and estimation under missing outcomes, as well as partial identification under various measurement error assumptions \citep{ding2018, chen2023, phung24zero, Imai2010, vanderweele2012results, Duarte2024}. Additionally, recent work has considered proxy-based causal inference methods for hidden outcomes in the presence of at least one unbiased proxy \citep{ Fu2026}. To the best of our knowledge, our work is the first to establish semiparametric estimation results for hidden outcomes that do not rely on unbiased proxies or partial observations of the true outcome. We assess the performance of our proposed estimators through simulation studies.
\newpage
\section{Background}

\subsection{Directed Acyclic Graphs (DAGs)}

We use a causal directed acyclic graph (DAG) as a convenient representation of a setting consistent with our assumptions (Figure~\ref{fig:hidden_outcome}). DAGs encode a statistical model; nodes correspond to random variables, and edges encode conditional independence relations, with all implied independences determined via \emph{d-separation}. Additionally, causal DAGs provide a transparent way to encode assumptions about the direction of causal relationships and the conditions under which causal effects can be \emph{identified}, i.e., expressed as a function of the joint distribution of observed variables (i.e., the observed-data law). \citep{pearl88probabilistic, pearl2009causality, spirtes01causation, richardson13swig}. 

Directed edges connecting \(A\), \(Y\), and \(C\) in Figure~\ref{fig:hidden_outcome} depict the standard confounding structure considered in causal inference problems, with \(C\) acting as a common cause of treatment \(A\) and outcome \(Y\). Dotted edges represent optional causal relationships and may be omitted without affecting any of the results developed in this paper.

\subsection{Confounding Adjustment Formula}

Causal effects are defined through interventional distributions that describe how outcomes $\vec{Y}$ would change if treatment variable(s) $\vec{A}$ were externally set to values $\vec{a}$ via the intervention operator $\text{do}(\vec{a})$ \citep{pearl2009causality}. These interventional quantities can equivalently be written in potential outcomes notation as $p(\vec{Y}(\vec{a}))$. If all variables in Figure~\ref{fig:hidden_outcome} are observed, then under Assumptions~\ref{assump:positivity}--\ref{assump:consistency}, together with Assumption~\ref{assump:ignorability} encoded by the causal DAG, the potential outcome distribution $p(Y(a))$ is identified for every \(a \in \mathcal{A}\) via the confounding adjustment formula in Equation~\ref{eq:adjustment}. For binary \(A\), this yields the causal effect
\(
\mathbb{E}[Y(1)] - \mathbb{E}[Y(0)].
\)

\begin{equation}\label{eq:adjustment}
p(Y(a)) \;=\; \int p(Y \mid A = a, C = c)\, p(C = c)\, dc.
\end{equation}

\begin{assumption}[Positivity]\label{assump:positivity} 
For each \(a \in \mathcal{A}\),
\[p(A=a \mid C) > 0.\]
\end{assumption}

\begin{assumption}[Consistency]\label{assump:consistency}
For each \(a \in \mathcal{A}\),
\[Y = Y(a) \text{ when } A = a.\]
\end{assumption}

\begin{assumption}[Conditional Ignorability] \label{assump:ignorability} For each \(a \in \mathcal{A}\),
\[
Y(a) \perp\!\!\!\perp A \mid C.
\]
\end{assumption}

\subsection{Related Work}\label{sec:related_work}

Our work builds on classic results such as Kruskal’s uniqueness theorem for the Candecomp/Parafac (CP) decomposition \citep{kruskal1977three} and subsequent developments \citep{kuroki14measurement, allman2009identifiability}, which represent the observed-data law as a multi-way array. Under conditional independence of the observed variables given latent variable(s), together with appropriate rank conditions, this array admits a unique factorization that recovers the full-data law. The underlying argument hinges on unique eigenspaces \citep{rhodes20101818}; obtaining the full-data law from the observed-data law can often be achieved via a sequence of eigendecomposition steps \citep{kuroki14measurement, allman2009identifiability}, or alternatively through iterative search procedures (e.g., alternating least squares, gradient descent) \citep{ALS, gradient_descent}.

Continuous analogues replace these finite-dimensional decompositions with operator-based formulations, relying on \emph{completeness} conditions characterizing the variability of conditional distributions. This perspective was developed by \citet{hu2008instrumental} and extended to causal inference settings in the presence of proxies for hidden confounders by \citet{deaner2023controllinglatentconfoundingtriple}, as well as to settings with proxies for hidden treatments by \citet{zhou2024causalinferencehiddentreatment}. Obtaining the full-data law from the observed-data law can be achieved through sieve-based or expectation-maximization (EM) approaches \citep{hu2008instrumental,zhou2024causalinferencehiddentreatment}.

We build on the ideas from \citet{hu2008instrumental}, \citet{deaner2023controllinglatentconfoundingtriple}, and 
\citet{zhou2024causalinferencehiddentreatment} by adapting such identifying assumptions to settings with analogous proxies for hidden outcomes. In addition, we develop influence function--based estimators for the resulting causal effects, yielding examples that are multiply robust and \(\sqrt{n}\)-consistent under product-bias conditions on nuisance estimation errors.

\section{Full-data Law Identification}\label{sec:identification}

We leverage the array decomposition framework discussed in Section \ref{sec:related_work} to identify
causal effects in settings where hidden outcomes have (possibly biased) proxies. We then develop novel influence function--based estimators for identified effects that are provably multiply robust and \(\sqrt{n}\)-consistent under product-bias conditions.

Consider a binary treatment \(A\), hidden outcome \(Y\), and observed confounders \(C\). Even under Assumptions~\ref{assump:positivity}--\ref{assump:ignorability}, causal effects are not generally identifiable since \(Y\) is unobserved. Now suppose that additional variables \(W\), \(Z\), and \(V\) (referred to henceforth as ``proxies'' for the hidden outcome) are observed, and that Assumptions~\ref{assump:boundedness_hidden_Y}--\ref{assump:unique_mapping} hold. Then identification of the full-data law, and thereby causal effects, becomes possible. A directed acyclic graph consistent with these assumptions is shown in Figure~\ref{fig:hidden_outcome}.
\begin{figure}[h!]
    \centering
        \centering
        \begin{tikzpicture}[>=latex, node distance=1.4cm, anchor=center]
            \node (A) [draw, circle] {A};
            \node (Y) [draw, circle, fill=red!30, right of=A] {Y};
            \node (C) [draw, circle, above of=A, xshift=0.7cm] {C};
            \node (W) [draw, circle, below of=Y, xshift=-0.7cm] {W};
            \node (Z) [draw, circle, below of=Y, xshift=0.6cm] {Z};
            \node (V) [draw, circle, below of=Y, xshift=1.7cm] {V};

            \draw[->] (A) -- (Y);
            \draw[->] (Y) -- (W);
            \draw[->] (Y) -- (Z);
            \draw[->] (Y) -- (V);
            \draw[->, dotted] (A) -- (Z);
            \draw[->, dotted] (A) -- (W);
            \draw[->, dotted] (A) -- (V);
            \draw[->] (C) -- (Y);
            \draw[->] (C) -- (A);
            \draw[->, dotted] (C) to[out=10, in=90] (Z);
            \draw[->, dotted] (C) -- (W);
            \draw[->, dotted] (C) to[out=10, in=90] (V);
        \end{tikzpicture}
        \caption{Hidden Outcome Model \\
        Supports \(\mathcal{A},\mathcal{Y}, \mathcal{C}, \mathcal{W},\mathcal{Z},\mathcal{V}\)}
        \label{fig:hidden_outcome}
    \hfill
\end{figure}

\begin{assumption}\label{assump:boundedness_hidden_Y}
$(A,Y,C,W,Z,V)$ admit a joint density that is bounded.

\end{assumption}

\begin{assumption}\label{assump:ci_hidden_y}
    $W, Z, V$ are mutually independent $\mid Y, A, C$
\end{assumption}

\begin{assumption}\label{assump:completeness_hidden_Y}
For each \((a,c) \in \mathcal{A} \times \mathcal{C}\), for
\(
g \in L^2,
\)
the following completeness conditions hold:
\[
\mathbb{E}[g(Y)\mid W,a,c]=0
\ \text{a.s.}
\iff
g(Y)=0
\ \text{a.s.}, \text{ and}
\]
\[
\mathbb{E}[g(Y)\mid Z,a,c]=0
\ \text{a.s.}
\iff
g(Y)=0
\ \text{a.s.}
\]
\end{assumption}

Assumption~\ref{assump:completeness_hidden_Y}, a set of \emph{completeness}
conditions, states that for each \((a,c) \in \mathcal{A} \times \mathcal{C}\), the conditional distributions
\[
\{p(W \mid y,a,c) : y \in \mathcal{Y}_{a,c}\}
\quad \text{and} \quad
\{p(Z \mid y,a,c) : y \in \mathcal{Y}_{a,c}\}
\]
vary sufficiently across \(y\), 
where \(\mathcal{Y}_{a,c}\) denotes the support of \(Y \mid A=a,C=c\). These conditions can be interpreted as a full-rank requirement on linear operators defined by these conditional distributions. For further discussion of this, see \citet{guo2025}. These completeness assumptions are flexible in the sense that in continuous models, if \(W\) and \(Z\) each have larger dimension than \(Y\) for each \((a,c)\), distributions that violate them can be approximated arbitrarily closely by distributions that satisfy them \citep{semiparametric2020cui, Canay_testability2013}. 

\begin{assumption}\label{assump:distinctness_hidden_Y}
For each \((a,c) \in \mathcal{A} \times \mathcal{C}\), for
\(y_i, y_j \in \mathcal{Y}_{a,c}\) with \(y_i \neq y_j\), the set
\(
\{v : p(v \mid y_i, a, c) \neq p(v \mid y_j, a, c)\}
\)
has positive probability under the marginal of \(V \mid A=a, C=c\).
\end{assumption}

Note that \(W\), \(Z\), and \(V\) play symmetric roles in these assumptions; the role of these variables may be interchanged with one another. Notably, Assumption~\ref{assump:distinctness_hidden_Y} is weaker than a completeness condition. In particular, a completeness statement of the form given in Assumption~\ref{assump:completeness_hidden_Y}, with $V$ in place of $W$ or $Z$, implies Assumption~\ref{assump:distinctness_hidden_Y}.

Assumptions~\ref{assump:boundedness_hidden_Y}--\ref{assump:distinctness_hidden_Y} recover \(p(Y=y_i,W,Z,V \mid A=a,C=c)\) up to the label \(y_i\) for each \((a,c)\). If \(Y\) has finite support, then, for each \((a,c)\), Assumptions~\ref{assump:boundedness_hidden_Y}--\ref{assump:distinctness_hidden_Y} imply a unique support size for \(\mathcal{Y}_{a,c}\) \citep{Derksen2013}. If \(\mathcal{Y}_{a,c}=\mathbb{R}^n\), the support dimension \(n\) is similarly determined under mild regularity conditions \citep{hu2008instrumental}. For practical purposes, however, we assume that the support of \(\mathcal{Y}_{a,c}\) is known for each \((a,c)\), simplifying search procedures for \(p(Y,W,Z,V \mid A=a,C=c)\).

\begin{assumption}\label{assump:known_support}
 For each
\(
(a,c) \in \mathcal{A} \times \mathcal{C}\), the support of \(\mathcal{Y}_{ac}\) is known.
\end{assumption}

Assumption~\ref{assump:unique_mapping} provides two different sufficient conditions for recovering the labels of \(Y\). In condition (i), following \citet{hu2008instrumental}, \(M\) may be any known functional of the conditional distribution in its argument. In condition (ii), a strengthening of Assumption 9 from \citet{deaner2023controllinglatentconfoundingtriple}, we first identify the ordering of the latent states of \(Y\). Then, given known support, the labels of \(Y\) can be recovered.

\begin{assumption}\label{assump:unique_mapping}
At least one of the following holds:
\renewcommand{\theenumi}{\roman{enumi}}%
\begin{enumerate}
\item For each
\(
(a,c) \in \mathcal{A} \times \mathcal{C}\), there exists a known functional \(M\) such that for some \(X \in \{W,Z,V\}\), 
\[
M\!\left[p(X \mid y,a,c)\right] = y
\text{ 
for all }
y \in \mathcal{Y}_{a,c}.
\]
\item $Y$ has finite-support, is ordinal, and for each \((a,c) \in \mathcal{A} \times \mathcal{C}\), for
\(y_i, y_j \in \mathcal{Y}_{a,c}\) with \(y_i \neq y_j\),
there exists a known functional \(M\) such that for some \(X \in \{W,Z,V\}\), 
\[
M\!\left[p(X \mid y_i,a,c)\right]
<
M\!\left[p(X \mid y_j,a,c)\right].
\]
\end{enumerate}
\end{assumption}

\begin{theorem}\label{thm:id}
Under Assumptions~\ref{assump:boundedness_hidden_Y}-\ref{assump:unique_mapping}, the full-data law of hidden and observed variables \(p(A,Y,C,W,Z,V)\) is identified.
\end{theorem}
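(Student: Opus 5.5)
We follow the operator-based argument of \citet{hu2008instrumental}, applied conditionally on each stratum \((a,c)\). Since \(p(A,Y,C,W,Z,V)=p(Y,W,Z,V\mid A,C)\,p(A,C)\) and \(p(A,C)\) is observed, it suffices to identify \(p(Y,W,Z,V\mid a,c)\) for each fixed \((a,c)\). By Assumption~\ref{assump:ci_hidden_y}, this joint factors as \(p(y\mid a,c)\,p(w\mid y,a,c)\,p(z\mid y,a,c)\,p(v\mid y,a,c)\). For a fixed \(v\), define integral operators on \(L^2\):
\[
L_{W\mid Y}\colon g\mapsto \int p(w\mid y,a,c)g(y)\,dy, \qquad L_{Y,Z}\colon h\mapsto \int p(y,z\mid a,c)h(z)\,dz.
\]
Also define the multiplication operator \(D_v\colon g(y)\mapsto p(v\mid y,a,c)g(y)\). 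The observed kernels \(p(w,v,z\mid a,c)\) and \(p(w,z\mid a,c)\) then give the operator identities
\[
L_{W,v,Z}=L_{W\mid Y}\,D_v\,L_{Y,Z}, \qquad L_{W,Z}=L_{W\mid Y}\,L_{Y,Z}.
\]
Assumption~\ref{assump:completeness_hidden_Y} makes \(L_{W\mid Y}\) injective and makes \(L_{Y,Z}\) have dense range. For the latter we use completeness of \(Z\) given \(Y\), together with \(p(y\mid a,c)>0\) on the support, and take adjoints. Hence \(L_{W,Z}\) is injective with dense range and has a (possibly unbounded) left inverse on its range. This gives the key relation
\[
L_{W,v,Z}\,L_{W,Z}^{-1}=L_{W\mid Y}\,D_v\,L_{W\mid Y}^{-1},
\]
valid for every \(v\), where the left-hand side is observed. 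Assumption~\ref{assump:boundedness_hidden_Y} ensures all these kernels define bounded (Hilbert--Schmidt) operators.

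Next we invoke uniqueness of spectral decompositions, as in Theorem 1 of \citet{hu2008instrumental}. The observed operator has eigenvalues \(p(v\mid y,a,c)\) and eigenfunctions \(p(\cdot\mid y,a,c)\), indexed by \(y\). Eigenfunctions are unique up to scaling, and since they are densities they are normalized to integrate to one. The main obstacle is degeneracy of eigenvalues: for a single \(v\), two distinct \(y_i\ne y_j\) may share an eigenvalue, so the eigenspace does not separate them. Here Assumption~\ref{assump:distinctness_hidden_Y} is used. Because the operators for all \(v\) share the same eigenfunctions, we consider the whole family indexed by \(v\). For any \(y_i\neq y_j\), the set of \(v\) where the eigenvalues differ has positive measure, so the joint spectral decomposition over \(v\) separates all eigenfunctions. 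Concretely, following the argument of \citet{hu2008instrumental}, we show that any alternative decomposition producing the same family for all \(v\) must agree up to a relabelling. This recovers the set \(\{(p(\cdot\mid y,a,c),\,p(v\mid y,a,c))\}\) up to an unknown bijection of labels. In the finite case this is simultaneous diagonalization of commuting matrices, with the rank fixed by Assumption~\ref{assump:known_support}.

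To fix the labels, we use Assumption~\ref{assump:unique_mapping}. Under (i), we apply \(M\) to the recovered \(p(X\mid y,a,c)\) for \(X=W\) or \(V\). For \(X=Z\), we first recover \(p(z\mid y,a,c)\) from \(L_{Y,Z}=L_{W\mid Y}^{-1}L_{W,Z}\), which is the step that uses injectivity of \(L_{W\mid Y}\). This yields \(y\) exactly. Under (ii), \(M\) induces a strict ordering of the finitely many recovered components, and Assumption~\ref{assump:known_support} matches this ordering to the known ordered support. Finally, \(p(y\mid a,c)\) is obtained by integrating \(p(y,z\mid a,c)\) over \(z\), where this joint comes from \(L_{Y,Z}\). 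Assembling \(p(y\mid a,c)\), \(p(w\mid y,a,c)\), \(p(z\mid y,a,c)\), and \(p(v\mid y,a,c)\) through the factorization, and multiplying by \(p(a,c)\), identifies the full-data law. The delicate points we expect are making the inverse well defined on the relevant domain, and the measure-theoretic version of the eigenvalue-separation argument.
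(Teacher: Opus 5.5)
Your overall route is the paper's. The paper's proof invokes Theorem 1 of Hu and Schennach to identify $p(Y,W,Z,V\mid a,c)$ in each stratum, then multiplies by the observed $p(a,c)$. You unpack that theorem's operator argument stratum by stratum: you diagonalize the observed family indexed by $v$, separate repeated eigenvalues with Assumption~\ref{assump:distinctness_hidden_Y}, and fix labels with Assumption~\ref{assump:unique_mapping}. This extra detail is useful because it shows where each of the two completeness conditions in Assumption~\ref{assump:completeness_hidden_Y} enters, which the bare citation leaves implicit. However, one intermediate claim is false as stated.

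The sentence ``Hence $L_{W,Z}$ is injective with dense range and has a left inverse on its range'' does not follow. It fails, for example, when $Y$ has finite support and $W,Z$ are continuous, as in the paper's mixed simulation designs.
\begin{itemize}
\item Injectivity: since $L_{W\mid Y}$ is injective, $\ker L_{W,Z}=\ker L_{Y,Z}$. Injectivity of $L_{Y,Z}$ would mean $\mathbb{E}[h(Z)\mid Y,a,c]=0\Rightarrow h=0$, i.e.\ completeness in the reverse direction, which is not assumed.
\item Dense range: $\operatorname{ran}L_{W,Z}\subseteq\operatorname{ran}L_{W\mid Y}$, which need not be dense.
\item Example: if $Y$ takes $R$ values, $L_{W,Z}$ has rank $R$, so it has an infinite-dimensional kernel and a finite-dimensional range.
\end{itemize}

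You never actually need $L_{W,Z}^{-1}$ as an operator. For $u=L_{W,Z}h$, define $T_v u:=L_{W,v,Z}h$. This is well defined: if $L_{W,Z}h=L_{W,Z}h'$, injectivity of $L_{W\mid Y}$ gives $L_{Y,Z}(h-h')=0$, hence $L_{W,v,Z}(h-h')=L_{W\mid Y}D_vL_{Y,Z}(h-h')=0$. Then $T_v=L_{W\mid Y}D_vL_{W\mid Y}^{-1}$ on $\operatorname{ran}L_{W,Z}=L_{W\mid Y}(\operatorname{ran}L_{Y,Z})$.

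The dense range of $L_{Y,Z}$, which you correctly derive from $Z$-completeness, is exactly what makes this domain large enough to pin down the conjugated operator. In the finite case $\operatorname{ran}L_{Y,Z}=\mathbb{R}^R$, so $T_v$ is known on the whole span of the $p(\cdot\mid y_i,a,c)$. With this replacement, the rest of your argument goes through as written.

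A smaller point: bounded densities do not make these kernels Hilbert--Schmidt when supports are unbounded. For instance, $\int\!\int p(w\mid y,a,c)^2\,dw\,dy$ diverges for a location family over $y\in\mathbb{R}$. That parenthetical should be dropped or replaced by a direct boundedness argument.
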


 A proof of Theorem~\ref{thm:id} is provided in Appendix~\ref{app:identification}, based on Theorem~1 in \citet{hu2008instrumental}.

Since the full-data law is identified, potential outcome distribution \(p(Y(a))\) is identified via Equation~\ref{eq:adjustment} under Assumptions~\ref{assump:positivity}-\ref{assump:ignorability}, yielding identification of the causal effect \(\mathbb{E}[Y(1)] - \mathbb{E}[Y(0)]\). This naturally raises the question of how to construct estimators with desirable properties, such as \(\sqrt{n}\)-consistency and multiple robustness.

\section{Estimation of Causal Effects}\label{sec:estimation}

Influence functions characterize the local sensitivity of a parameter to perturbations of the data-generating distribution, and can yield estimators that are multiply robust (i.e., consistent under partial model misspecification) and \(\sqrt{n}\)-consistent despite nuisance functions being estimated at slower-than-parametric \(\sqrt{n}\) rates, making them particularly useful when nuisance functions are estimated flexibly \citep{KennedyTutorial}.
 
\subsection{Full-data Influence Functions}
The full data consist of \(H = (A,Y,C,W,Z,V)\), where \(Y\) is unobserved. Let \(\psi\) be a functional of the full-data law \(p(H)\), indexed by nuisance functions \(\eta\). For a regular parametric submodel \(\{P_\epsilon(H) : \epsilon \in \mathbb{R}\}\) with corresponding densities \(p_\epsilon(H)\) and score
\[
s(H) = \left.\frac{\partial}{\partial \epsilon} \log p_\epsilon(H)\right|_{\epsilon=0},
\]
a \emph{full-data influence function} \(\phi_{\mathrm{full}}(H; \psi, \eta)\) satisfies
\[
\left.\frac{d}{d\epsilon}\psi(P_\epsilon(H))\right|_{\epsilon=0}
= \mathbb{E}\big[\phi_{\mathrm{full}}(H; \psi, \eta)\, s(H)\big],
\qquad 
\mathbb{E}\big[\phi_{\mathrm{full}}(H; \psi, \eta)\big]=0,
\]
for all such submodels. For example, under Assumptions~\ref{assump:positivity}-\ref{assump:ignorability}, we give a full-data influence function \(\phi_{\mathrm{full}}(H; \psi_a, \eta)\) for \(\psi_a = \mathbb{E}[Y(a)] = \int p(Y \mid A=a, C=c)p(C=c)dc\) in Equation~\ref{eq:full_IF}. A proof is provided in Appendix~\ref{app:full_IF}.
\begin{equation}\label{eq:full_IF}
\phi_{\mathrm{full}}(A,Y,C; \psi_a, \eta) 
= \frac{\mathbb{I}(A=a)}{p(A=a \mid C)}
   \Big\{ Y - \mathbb{E}[Y \mid A=a, C] \Big\} 
   + \mathbb{E}[Y \mid A=a, C] - \psi_a .
\end{equation}
Because \(Y\) is unobserved, \(\phi_{\mathrm{full}}(H; \psi_a, \eta)\) is not directly available.

\subsection{Observed-data Influence Functions}
The observed data consist of \(O = (A,C,W,Z,V)\). Let \(\psi\) be a functional of the observed-data law \(p(O)\), indexed by nuisance functions \(\nu\).  For a regular parametric submodel \(\{P_\epsilon(O): \epsilon \in \mathbb{R}\}\) with corresponding densities \(p_\epsilon(O)\) and score
\[
s(O) = \left.\frac{\partial}{\partial \epsilon} \log p_\epsilon(O)\right|_{\epsilon=0},
\]
an \emph{observed-data influence function} \(\phi_{\mathrm{obs}}(O; \psi, \nu)\) satisfies
\[
\left.\frac{d}{d\epsilon}\psi(P_\epsilon(O))\right|_{\epsilon=0}
= \mathbb{E}\big[\phi_{\mathrm{obs}}(O; \psi, \nu)\, s(O)\big],
\qquad 
\mathbb{E}\big[\phi_{\mathrm{obs}}(O; \psi, \eta)\big]=0,
\]
for all such submodels.

\subsection{Estimation Framework}
Under Assumptions~\ref{assump:positivity}--\ref{assump:unique_mapping}, the average counterfactual outcome \(\psi_a = \mathbb{E}[Y(a)] = \int p(Y \mid A=a, C=c)p(C=c)dc\) is identified, and thereby a functional of the observed-data law. We will construct an observed-data influence function \(\phi_{\mathrm{obs}}(O; \psi_a, \eta)\)\footnote{Note that under full-law identification, the observed-data nuisance functions $\eta$ and the full-data nuisance
functions $\nu$ are in one-to-one correspondence, so that either may be used to index the observed distribution.} for this target.

An estimator \(\hat\psi_a\) is obtained by solving the estimating equation
\begin{equation}\label{eq:estimation}
   \mathbb{P}_n\{\phi_{\mathrm{obs}}(O; \hat\psi_a, \hat\eta)\} = 0,
\end{equation}
where \(\mathbb{P}_n\) denotes the empirical average. Under regularity conditions, \(\hat\psi_a\) is \(\sqrt{n}\)-consistent and asymptotically normal with asymptotic variance
\(
\mathbb{E}\big[\phi_{\mathrm{obs}}^2(O; \psi_a, \eta)\big]
\) \cite{tsiatis06missing}.

The average treatment effect \(\psi = \mathbb{E}[Y(1)] - \mathbb{E}[Y(0)]\) can then be estimated by
\[
\hat\psi = \hat\psi_1 - \hat\psi_0.
\]
This estimator follows directly from the construction of \(\hat\psi_a\), inheriting multiple robustness properties \(\hat\psi_a\) may possess, and is \(\sqrt{n}\)-consistent and asymptotically normal under regularity conditions, with asymptotic variance
\(
\mathbb{E}\Big[\big\{\phi_{\mathrm{obs}}(O; \psi_1, \eta) - \phi_{\mathrm{obs}}(O; \psi_0, \eta)\big\}^2\Big]
\).

\subsection{Influence Function Construction}

We construct an observed-data influence function 
\(\phi_{\mathrm{obs}}(O; \psi_a, \eta)\) for target \(\psi_a\) by replacing the unobserved random variable \(Y\) in the full-data influence function in Equation~\ref{eq:full_IF} with a function \(\Omega(O;\eta)\) of the observed data respecting equations delineated in Assumption~\ref{assump:weight}.

\begin{assumption}\label{assump:weight}
There exists a function \(\Omega(O;\eta)\) such that for each \(X \in \{W,Z,V\}\),
\begin{equation}\label{eq:bridge}
\mathbb{E}\big[\Omega(O;\eta)\mid Y,A,C,X\big] = Y.
\end{equation}
\end{assumption}

\begin{proposition}\label{prop:guaranteed_weight}
Suppose \(Y\) has finite support. Further suppose that we strengthen Assumption~\ref{assump:distinctness_hidden_Y} to a completeness conditions; that is, for each \((a,c) \in \mathcal{A} \times \mathcal{C}\), for \( g \in L^2, \) \[ \mathbb{E}[g(Y)\mid V,a,c]=0 \ \text{a.s.} \iff g(Y)=0 \ \text{a.s.}\]
Then under Assumptions~\ref{assump:boundedness_hidden_Y}-\ref{assump:unique_mapping}, there exists weights satisfying that for every \(X \in \{W,Z,V\}\),
\begin{equation}\label{eq:indicator}
\mathbb{E}\big[\omega_y(O;\eta)\mid Y=y',A,C,X\big]
=
\mathbf{1}\{y=y'\}
\end{equation}
so that \[
\Omega(O;\eta)
=
\sum_{y \in \mathcal{Y}_{A,C}}
\omega_y(O;\eta)\, y,
\]
where \(\mathcal{Y}_{A,C}\) denotes the support of \(Y\mid A,C\), satisfies Assumption~\ref{assump:weight}.
\end{proposition}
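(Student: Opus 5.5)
Fix \((a,c)\) and let \(\mathcal{Y}_{a,c}=\{y_1,\dots,y_k\}\), which is finite and known by Assumption~\ref{assump:known_support}. The plan is to look for weights of product form
\[
\omega_y(O;\eta)=\sum_{j}\alpha^{(a,c)}_{y}\,\big\{\text{terms built from } f_j(W),\,g_j(Z),\,h_j(V)\big\},
\]
and to exploit the conditional independence of Assumption~\ref{assump:ci_hidden_y}. Under that assumption, conditioning on \((Y,A,C,X)\) leaves the two remaining proxies independent of each other and of \(X\). The key building block is therefore a set of functions \(b^W_y(W)\), \(b^Z_y(Z)\), \(b^V_y(V)\) (depending on \((a,c)\)) satisfying
\[
\mathbb{E}[b^X_y(X)\mid Y=y',a,c]=\mathbf 1\{y=y'\}, \qquad X\in\{W,Z,V\}.
\]
In words, each proxy separately carries a ``dual basis'' for the \(k\) latent states.

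\textbf{Step 1 (existence of dual functions).} Consider the linear map \(T_X:g\mapsto \big(\mathbb{E}[g(X)\mid y_i,a,c]\big)_{i=1}^k\) from \(L^2\) functions of \(X\) to \(\mathbb{R}^k\). Completeness of \(Y\) given \(X\) (Assumption~\ref{assump:completeness_hidden_Y} for \(W,Z\), and the strengthened hypothesis for \(V\)) says that no nonzero vector \(\beta\in\mathbb{R}^k\) can have \(\sum_i \beta_i\, p(y_i\mid X,a,c)=0\) a.s.

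By Bayes' rule, \(p(y_i\mid x,a,c)\propto p(x\mid y_i,a,c)\,p(y_i\mid a,c)\), with positive prior masses on the support. It follows that the \(k\) densities \(p(\cdot\mid y_i,a,c)\) are linearly independent in \(L^2\); boundedness from Assumption~\ref{assump:boundedness_hidden_Y} makes them square-integrable against a suitable reference measure. Linear independence is equivalent to \(T_X\) having trivial cokernel, that is, to \(T_X\) being surjective. The dual functions \(b^X_y\) are then obtained explicitly as \(b^X_{y_j}(x)=\sum_i (G^{-1})_{ji}\,p(x\mid y_i,a,c)/q(x)\), where \(G\) is the Gram matrix of the densities with respect to a reference density \(q\).

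Each \(b^X_y\) is a functional of \(p(X\mid Y,a,c)\). That law is identified by Theorem~\ref{thm:id}, so the weights legitimately depend on \(\eta\).

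\textbf{Step 2 (choosing the weight).} The main obstacle is that one single \(\omega_y\) must satisfy the identity for each conditioning proxy \(X\). A single \(b^W_y(W)\) fails once we condition on \(X=W\), since it then becomes a fixed function of \(W\) rather than an indicator. My first attempt is the symmetric combination
\[
\omega_y=\tfrac12\big\{b^W_y(W)b^Z_y(Z)+b^Z_y(Z)b^V_y(V)+b^W_y(W)b^V_y(V)\big\}-\tfrac12\, b^W_y(W)b^Z_y(Z)b^V_y(V)\cdot\kappa ,
\]
with \(\kappa\) to be tuned. Using mutual conditional independence, I will compute \(\mathbb{E}[\omega_y\mid Y=y',a,c,X=x]\) for \(X=W\):
\begin{itemize}
\item the term \(b^Z_yb^V_y\) contributes \(\mathbf 1\{y=y'\}\);
\item the terms containing \(b^W_y\) contribute \(b^W_y(x)\mathbf 1\{y=y'\}\) times a constant;
\item the triple product contributes \(b^W_y(x)\mathbf 1\{y=y'\}\).
\end{itemize}
Since \(\mathbf 1\{y=y'\}^2=\mathbf 1\{y=y'\}\), the coefficients can be matched so that the \(b^W_y(x)\) pieces cancel. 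With pairwise sum \(S\) and triple product \(P\), conditioning on \(W\) gives \(\alpha\,\mathbf 1 + 2\alpha\, b^W_y(x)\mathbf 1 - \beta\, b^W_y(x)\mathbf 1\) for \(\omega_y=\alpha S-\beta P\). Taking \(\alpha=1\) and \(\beta=2\), that is,
\[
\omega_y=b^W_yb^Z_y+b^Z_yb^V_y+b^W_yb^V_y-2\,b^W_yb^Z_yb^V_y,
\]
yields exactly \(\mathbf 1\{y=y'\}\). By symmetry the same holds when conditioning on \(Z\) or on \(V\).

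\textbf{Step 3 (conclusion).} Define \(\Omega=\sum_{y\in\mathcal{Y}_{A,C}}\omega_y\, y\). Linearity of conditional expectation together with Equation~\ref{eq:indicator} gives \(\mathbb{E}[\Omega\mid Y=y',A,C,X]=\sum_y y\,\mathbf 1\{y=y'\}=y'\), which is Assumption~\ref{assump:weight}.

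The delicate points to check are the following. First, the \(b^X_y\) must have finite second moments, so that the products are integrable; this uses Assumption~\ref{assump:boundedness_hidden_Y} and an appropriate choice of \(q\), for example \(q=p(x\mid a,c)\), which is a mixture of the component densities. Second, the construction must be measurable in \((a,c)\).
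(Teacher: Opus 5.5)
Your proposal is correct and follows essentially the same route as the paper. The paper also builds dual functions $h_{X,i}=e_i^\top\mathbf{M}_X^{-1}q_X$ from the inverse Gram matrix of the linearly independent component laws, taking Radon--Nikodym derivatives with respect to $\nu=\sum_i p(\cdot\mid y_i,a,c)$ where you use the mixture density $p(x\mid a,c)$ as reference (an immaterial difference), and then uses exactly your combination $b^Wb^Z+b^Zb^V+b^Wb^V-2b^Wb^Zb^V$; your explicit derivation of the coefficients and your boundedness check spell out steps the paper only asserts.
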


A constructive proof of Proposition~\ref{prop:guaranteed_weight} is provided in Appendix~\ref{app:guaranteed_weight}, although the nuisance functions involved are relatively complex. Appendix~\ref{app:weights} provides a procedure for constructing weights \(\omega_y(O;\eta)\) constituting \(\Omega(O;\eta)\) in terms of more tractable nuisance functions. We give an example below. Our strategy begins by selecting a collection of basis functions for each proxy variable. Generally, we may select appropriate bases from a large dictionary of function classes including indicator functions, polynomials, splines, and Fourier features.

\begin{example}\label{ex:binary}
Suppose that \(Y\) is binary conditional on each \((a,c)\), and
\[
(W,Z,V)\mid Y=y_i,A,C
\sim
N\!\left(
\begin{pmatrix}
\mu_{W,i}(A,C)\\
\mu_{Z,i}(A,C)\\
\mu_{V,i}(A,C)
\end{pmatrix},
\begin{pmatrix}
\sigma_W^2(A,C) & 0 & 0\\
0 & \sigma_Z^2(A,C) & 0\\
0 & 0 & \sigma_V^2(A,C)
\end{pmatrix}
\right).
\]
Assume that the conditional means are distinct across
latent states:
\[
\mu^W_i(A,C) \neq \mu^W_j(A,C),\qquad
\mu^Z_i(A,C) \neq \mu^Z_j(A,C),\qquad
\mu^V_i(A,C) \neq \mu^V_j(A,C),
\qquad i\neq j.
\] 

Choose two basis functions
\[
q_{W,1}(W)=1,
\qquad
q_{W,2}(W)=W.
\]
Define the conditional moment matrix
\[
\mathbf M_W(A,C)
=
\begin{pmatrix}
\mathbb{E}[q_{W,1}\mid y_1,A,C] & \mathbb{E}[q_{W,1}\mid y_2,A,C] \\
\mathbb{E}[q_{W,2}\mid y_1,A,C] & \mathbb{E}[q_{W,2}\mid y_2,A,C]  
\end{pmatrix}.
\]

By our chosen basis, 
\[
\mathbf M_W(A,C)
=
\begin{pmatrix}
1 & 1 
\\
\mathbb{E}(W\mid Y=y_1,A,C) &
\mathbb{E}(W\mid Y=y_2,A,C) 
\end{pmatrix}.
\]

Since \(\mathbf M_W(A,C)\) is invertible,
define
\[
h_W(W,A,C;\eta)
=
\mathbf M_W(A,C)^{-1}
\begin{pmatrix}
1\\
W
\end{pmatrix}.
\]
Then for the \(i\)th component of \(h_W(W,A,C;\eta)\), denoted \(h_{W,i}(W,A,C;\eta)\), we have that
\(
\mathbb E\!\left[
h_{W,i}(W,A,C;\eta)
\mid Y=y_j,A=a,C=c
\right]
=
\mathbf 1\{i=j\}.
\)

Similarly, construct \(h_Z(Z,A,C;\eta)\) and \(h_V(V,A,C;\eta)\) from
\(\mathbf M_Z(A,C)\) and \(\mathbf M_V(A,C)\), respectively. Finally, define
\begin{equation}\label{eq:binary_weight1}
\begin{aligned}
\omega_{y_i}(O;\eta)
&=
h_{W,i}(W,A,C;\eta)h_{Z,i}(Z,A,C;\eta)
+
h_{W,i}(W,A,C;\eta)h_{V,i}(V,A,C;\eta) \\
&\quad +
h_{Z,i}(Z,A,C;\eta)h_{V,i}(V,A,C;\eta)
-
2h_{W,i}(W,A,C;\eta)h_{Z,i}(Z,A,C;\eta)h_{V,i}(V,A,C;\eta).
\end{aligned}
\end{equation}

Equivalently,
\begin{equation}\label{eq:binary_weight2}
\begin{aligned}
\omega_{y_i}(O;\eta)
&=
\frac{(W-\mu^W_j)(Z-\mu^Z_j)}
     {(\mu^W_i-\mu^W_j)(\mu^Z_i-\mu^Z_j)}
+
\frac{(Z-\mu^Z_j)(V-\mu^V_j)}
     {(\mu^Z_i-\mu^Z_j)(\mu^V_i-\mu^V_j)} \\
&\quad +
\frac{(W-\mu^W_j)(V-\mu^V_j)}
     {(\mu^W_i-\mu^W_j)(\mu^V_i-\mu^V_j)}
-
2\frac{(W-\mu^W_j)(Z-\mu^Z_j)(V-\mu^V_j)}
       {(\mu^W_i-\mu^W_j)(\mu^Z_i-\mu^Z_j)(\mu^V_i-\mu^V_j)},
\qquad i\neq j,
\end{aligned}
\end{equation}
where
\(
\mu_{X,i}=\mathbb E[X\mid Y=y_i,A,C]\) and \(
\mu_j^X=\mathbb E[X\mid Y=y_j,A,C]
\)
for \(X\in\{W,Z,V\}\).

Then
\[
\Omega(O;\eta)
=
\sum_{y_i \in \mathcal{Y}_{A,C}}
\omega_{y_i}(O;\eta) y_i
\]
satisfies
\[
\mathbb{E}\!\left[
\Omega(O;\eta)
\;\middle|\;
Y,A,C,X
\right]
=
Y,
\qquad
X \in \{W,Z,V\}.
\]
\end{example}

\begin{proposition}\label{prop:polynomial_basis}
Suppose that \(Y\) has finite support, Assumptions~\ref{assump:boundedness_hidden_Y}-\ref{assump:unique_mapping} hold, and 
\[
X \mid Y=y_i,A,C
\sim
\mu_{X,i}(A,C)+\epsilon_X(A,C),
\qquad X\in\{W,Z,V\},
\]
where error term \(\epsilon_X(A,C)\) does not depend on \(Y\) and for \(i \neq j\), \(\mu_{X,i}(A,C) \neq \mu_j^X(A,C)\). Then the weights \(\omega_y(O;\eta)\) in
Proposition~\ref{prop:guaranteed_weight} may be constructed using polynomial
bases. In particular, in Appendix~\ref{app:weights}, one may define
\(
q_X(X,A,C)
=
(1,X,\ldots,X^{R(A,C)-1})^\top
\) for each \(X \in \{W,Z,V\}\). 
\end{proposition}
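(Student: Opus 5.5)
The plan is to reduce Proposition~\ref{prop:polynomial_basis} to the invertibility of a Vandermonde matrix, and then reuse the construction of Appendix~\ref{app:weights}, which is the one carried out explicitly in Example~\ref{ex:binary}.

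Fix \((a,c)\) and write \(R=R(a,c)=|\mathcal{Y}_{a,c}|\), which is finite and known by Assumption~\ref{assump:known_support}. Label the support points \(y_1,\dots,y_R\); labels are available under Assumption~\ref{assump:unique_mapping}, and in any case the argument does not depend on them. For each \(X\in\{W,Z,V\}\), form the \(R\times R\) conditional moment matrix
\[
[\mathbf M_X(a,c)]_{k,i}=\mathbb{E}[X^{k-1}\mid Y=y_i,a,c],\qquad k,i=1,\dots,R.
\]
The procedure in Appendix~\ref{app:weights} only needs two things from \(\mathbf M_X(a,c)\):
\begin{enumerate}
\item its entries are finite;
\item it is invertible.
\end{enumerate}
Given these, we set \(h_X=\mathbf M_X^{-1}q_X\). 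Then \(\mathbb{E}[h_{X,i}\mid Y=y_j,a,c]=\mathbf 1\{i=j\}\), and, as in equation~\eqref{eq:binary_weight1}, combining \(h_W,h_Z,h_V\) multiplicatively by inclusion--exclusion gives the weights \(\omega_y\).

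For the combination step, write \(\delta_X=h_{X,i}-\mathbf 1\{y=y_i\}\), which has conditional mean zero given \(Y=y\). Define
\[
\omega_{y_i}=1-(1-h_{W,i})(1-h_{Z,i})(1-h_{V,i})-\big(\text{terms chosen so single conditioning works}\big).
\]
Concretely, \(\omega_{y_i}=h_Wh_Z+h_Wh_V+h_Zh_V-2h_Wh_Zh_V\), with the index \(i\) dropped.

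We check it by conditioning on \((Y=y,a,c,X)\) for one proxy, say \(X=W\). Assumption~\ref{assump:ci_hidden_y} lets us replace \(h_Z\) and \(h_V\) by their conditional means, both equal to \(\mathbf 1\{y=y_i\}=:e\in\{0,1\}\). The result is
\[
h_We+h_We+e^2-2h_We^2=e+2h_W(e-e^2)=e .
\]
The cases \(X=Z\) and \(X=V\) follow by symmetry. This gives equation~\eqref{eq:indicator}, and hence Assumption~\ref{assump:weight} for \(\Omega\).

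The main step is invertibility of \(\mathbf M_X\) under the location model. Writing \(X=\mu_{X,i}+\epsilon_X\) given \(Y=y_i\), the binomial theorem gives
\[
\mathbb{E}[X^{k}\mid y_i,a,c]=\sum_{l=0}^{k}\binom{k}{l}\mu_{X,i}^{\,l}\,m_{k-l},\qquad m_r=\mathbb{E}[\epsilon_X^r\mid a,c],
\]
where \(m_0=1\) and \(m_r\) does not depend on \(i\) because \(\epsilon_X\) does not depend on \(Y\). So \(\mathbf M_X=L\,V\). Here \(V\) is the Vandermonde matrix with entries \(V_{l,i}=\mu_{X,i}^{\,l-1}\). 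The matrix \(L\) is lower triangular with \(L_{k,l}=\binom{k-1}{l-1}m_{k-l}\) and diagonal entries \(\binom{k-1}{k-1}m_0=1\), so \(L\) is unit lower triangular and invertible. Since the \(\mu_{X,i}(a,c)\) are pairwise distinct, \(\det V\neq0\). Hence \(\mathbf M_X\) is invertible, and the polynomial basis \(q_X=(1,X,\dots,X^{R-1})^\top\) is admissible.

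The remaining obstacle is finiteness of the moments \(m_r\) for \(r\le R-1\); without it the entries of \(\mathbf M_X\) may not exist. I would handle this by noting that the moments must exist for the procedure to be meaningful. The plan is either to read this as implicit in the hypothesis that the polynomial basis is used (equivalently, \(q_X\in L^2\) conditionally), or to derive it from Assumption~\ref{assump:boundedness_hidden_Y} together with a mild tail condition on \(\epsilon_X\). This makes \(h_X\) square integrable, so the conditional expectations above are well defined.
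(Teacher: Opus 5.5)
Your proposal is correct and follows essentially the same route as the paper's proof: the binomial expansion writes $\mathbf M_X(A,C)$ as an invertible transformation of the Vandermonde matrix in the distinct means $\mu_{X,1},\dots,\mu_{X,R}$, so $\mathbf M_X(A,C)$ is invertible and the construction of Appendix~\ref{app:weights} goes through with $q_X=(1,X,\dots,X^{R-1})^\top$. Three of your additions are refinements the paper leaves implicit: the explicit factorization $\mathbf M_X=LV$ with $L$ unit lower triangular, the direct check of the combined weight, and the remark that finite moments of $\epsilon_X$ up to order $R-1$ are tacitly required, since a bounded density alone does not guarantee them.
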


We prove Proposition~\ref{prop:polynomial_basis} in Appendix~\ref{app:polynomial_basis}.

When \(Y\) is continuous, characterizing existence of \(\Omega(O;\eta)\) in Assumption~\ref{assump:weight} is more challenging.
We omit further details, but note that without an explicit functional representation of the solution \(\Omega(O;\eta)\) in terms of nuisance functions \(\eta\), establishing multiple robustness properties and product-bias conditions sufficient for \(\sqrt{n}\)-consistent estimation is difficult. However, when \(Y\) has finite support and \(\Omega(O;\eta)\) is constructed as a weighted average over possible values \(y\),
\[
\Omega(O;\eta)
=
\sum_{y\in\mathcal{Y}_{A,C}}
\omega_y(O;\eta)\,y,
\]
with weights \(\omega_y(O;\eta)\) constructed as in Appendix~\ref{app:weights}, these properties can be established directly. We focus on these results for the remainder of the paper.

We give an observed-data influence function below:

\begin{theorem}[Observed-Data Influence Function]\label{thm:obsIF}
Under Assumptions~\ref{assump:positivity}--\ref{assump:weight}, the function \(\phi_{\mathrm{obs}}(O; \psi_a, \eta)\) below is an observed-data influence function for the average counterfactual outcome \(\psi_a = \mathbb{E}[Y(a)]\). 

A proof is given in Appendix~\ref{app:obsIF}. 

\begin{align*}
\phi_{\mathrm{obs}}(O; \psi_a, \eta)
&= \frac{\mathbb{I}(A=a)}{p(A \mid C)} 
   \Bigg\{ 
      \Omega(O;\eta)
      - \mathbb{E}[Y \mid A=a, C]
   \Bigg\} + \mathbb{E}[Y \mid A=a, C] - \psi_a.
\end{align*}

When $Y$ has finite support, we take
\[
\Omega(O;\eta)
=
\sum_{y \in \mathcal{Y}_{A,C}}
\omega_y(O;\eta)\, y,
\]
where \(\mathcal{Y}_{A,C}\) denotes the support of \(Y\mid A,C\) and the weights \(\omega_y(O;\eta)\) are constructed as in Appendix~\ref{app:weights}.
\end{theorem}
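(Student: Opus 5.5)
The approach is to verify the defining pathwise-derivative identity directly for the observed-data law, building on the full-data influence function in Equation~\ref{eq:full_IF}. Write $\phi_{\mathrm{full}}$ for that function and $\phi_{\mathrm{obs}}$ for the proposed one. They differ only in that $Y$ is replaced by $\Omega(O;\eta)$ inside the inverse-weighted term. Because the weight $\mathbb{I}(A=a)/p(A\mid C)$ and $\mathbb{E}[Y\mid A=a,C]$ are functions of $(A,C)$ only, Assumption~\ref{assump:weight} gives
\[
\mathbb{E}\big[\phi_{\mathrm{obs}}(O;\psi_a,\eta)\mid Y,A,C,X\big]=\phi_{\mathrm{full}}(A,Y,C;\psi_a,\eta)\quad\text{for each } X\in\{W,Z,V\}.
\]
Iterating expectations then yields $\mathbb{E}[\phi_{\mathrm{obs}}]=\mathbb{E}[\phi_{\mathrm{full}}]=0$, which is the mean-zero requirement.

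For the derivative condition, I would take an arbitrary regular submodel $p_\epsilon(O)$ of the observed-data law. By Theorem~\ref{thm:id}, full-law identification makes the map between the observed and full-data laws one-to-one, so each observed submodel is induced by a full-data submodel $p_\epsilon(H)$ within the model defined by Assumptions~\ref{assump:boundedness_hidden_Y}--\ref{assump:unique_mapping}. Its observed score is $s(O)=\mathbb{E}[s(H)\mid O]$. By pathwise differentiability of $\psi_a$ at the full-data level,
\[
\frac{d}{d\epsilon}\psi_a\big|_0=\mathbb{E}[\phi_{\mathrm{full}}\,s(H)].
\]
It therefore suffices to show
\[
\mathbb{E}[\phi_{\mathrm{obs}}\,s(O)]=\mathbb{E}[\phi_{\mathrm{obs}}\,s(H)]=\mathbb{E}[\phi_{\mathrm{full}}\,s(H)].
\]
The first equality is immediate, because $\phi_{\mathrm{obs}}$ is a function of $O$ and $s(O)=\mathbb{E}[s(H)\mid O]$. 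The second equality is the substantive step.

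To prove it, I would decompose the full-data score along the factorization implied by Assumption~\ref{assump:ci_hidden_y}:
\[
s(H)=s_C+s_{A\mid C}+s_{Y\mid A,C}+s_{W\mid Y,A,C}+s_{Z\mid Y,A,C}+s_{V\mid Y,A,C}.
\]
The first three terms are functions of $(A,Y,C)$. For these, conditioning on $(Y,A,C)$ and applying the bridge equation replaces $\phi_{\mathrm{obs}}$ by $\phi_{\mathrm{full}}$, since it suffices that $\mathbb{E}[\Omega\mid Y,A,C]=Y$.

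For a proxy score such as $s_{W\mid Y,A,C}$, I would condition on $(Y,A,C,W)$. Assumption~\ref{assump:weight} with $X=W$ gives $\mathbb{E}[\phi_{\mathrm{obs}}\,s_{W}]=\mathbb{E}[\phi_{\mathrm{full}}\,s_{W}]$. This quantity equals zero, because $\phi_{\mathrm{full}}$ depends only on $(A,Y,C)$ and $\mathbb{E}[s_{W}\mid Y,A,C]=0$. The same argument handles $Z$ and $V$. This is exactly why the bridge condition is required for each $X$ separately rather than only on average.

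I expect the main obstacle to be justifying the score correspondence: every observed-data score must arise as $\mathbb{E}[s(H)\mid O]$ for a full-data submodel respecting Assumptions~\ref{assump:boundedness_hidden_Y}--\ref{assump:unique_mapping}, and the resulting functional must be pathwise differentiable along it. I would handle this using the one-to-one correspondence of nuisances (the footnote and Theorem~\ref{thm:id}). I would also note that the argument only establishes that $\phi_{\mathrm{obs}}$ is an influence function, not necessarily the efficient one, so tangent-space characterization is not needed.

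Finally, when $Y$ has finite support, the weights constructed as in Appendix~\ref{app:weights} satisfy Equation~\ref{eq:indicator}. Consequently $\Omega=\sum_y\omega_y y$ satisfies Assumption~\ref{assump:weight}, which is the content of Proposition~\ref{prop:guaranteed_weight}, and the specialization stated in the theorem follows.
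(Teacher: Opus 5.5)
Your proposal is correct and follows essentially the same route as the paper's proof in Appendix~\ref{app:obsIF}. Both lift the observed-data submodel to a full-data one with \(s(O)=\mathbb{E}[s(H)\mid O]\), split \(s(H)\) along the factorization implied by Assumption~\ref{assump:ci_hidden_y}, use the bridge condition conditional on \((Y,A,C,X)\) for each \(X\in\{W,Z,V\}\) to replace \(\phi_{\mathrm{obs}}\) by \(\phi_{\mathrm{full}}\), and obtain mean zero by iterated expectations; your grouping of the \((A,Y,C)\)-measurable score terms and your remark that the proxy-score terms vanish are only minor streamlinings, and the paper likewise asserts the submodel-lifting step you flag without further justification.
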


\subsection{Influence Function Properties}

When $Y$ has finite support, our observed-data influence function gives rise to an estimator that is multiply robust (consistent despite partial model misspecification), formalized below.

\begin{theorem}[Multiple Robustness]\label{thm:robustness}
Let \(\phi_{\mathrm{obs}}(O; \psi_a, \eta)\) denote the observed-data influence function in Theorem~\ref{thm:obsIF}, where $Y$ has finite support and weights \(\omega_y(O;\eta)\) are constructed as in Appendix~\ref{app:weights}. Then the estimator \(\hat{\psi}_a\) constructed from Equation~\ref{eq:estimation} is consistent under standard regularity conditions if any one of nuisance sets (i)--(vi) is correctly specified.

A proof is provided in Appendix~\ref{app:robustness}.

\renewcommand{\theenumi}{\roman{enumi}}%
\begin{enumerate}
  \item $p(A \mid C)$, $\mathbf{M}_W(A,C)$, and $\mathbf{M}_Z(A,C)$;  
  \item $p(A \mid C)$, $\mathbf{M}_W(A,C)$, and $\mathbf{M}_V(A,C)$;  
  \item $p(A \mid C)$, $\mathbf{M}_Z(A,C)$, and $\mathbf{M}_V(A,C)$;  
  \item $\mathbb{E}[Y \mid A,C]$, $\mathbf{M}_W(A,C)$, and $\mathbf{M}_Z(A,C)$;  
  \item $\mathbb{E}[Y \mid A,C]$, $\mathbf{M}_W(A,C)$, and $\mathbf{M}_V(A,C)$;  
  \item $\mathbb{E}[Y \mid A,C]$,  $\mathbf{M}_Z(A,C)$, and $\mathbf{M}_V(A,C)$.  
\end{enumerate}

For \(X \in \{W,Z,V\}\), the matrix \(\mathbf{M}_X(A,C)\) is correctly specified whenever its entries are
correctly specified. In settings where \(X\) has finite support,
this is equivalent to correct specification of the conditional distribution
\(p(X \mid Y,A,C)\).

\end{theorem}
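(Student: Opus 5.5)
The plan is the usual argument for multiple robustness. Consistency of $\hat\psi_a$ solving Equation~\ref{eq:estimation} follows under standard regularity conditions (Glivenko--Cantelli, uniqueness of the root) once we show that
\[
\mathbb{E}\big[\phi_{\mathrm{obs}}(O;\psi_a,\tilde\eta)\big]=0,
\]
where $\tilde\eta$ denotes probability limits of the nuisance estimates that agree with the truth on one of the sets (i)--(vi). Write $\tilde\pi(C)$ for the limit of $p(A=a\mid C)$, $\tilde\mu(C)$ for the limit of $\mathbb{E}[Y\mid A=a,C]$, and $\tilde\Omega$ for $\Omega$ built from limiting matrices $\tilde{\mathbf M}_W,\tilde{\mathbf M}_Z,\tilde{\mathbf M}_V$. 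The argument iterates expectations: first over $(W,Z,V)$ given $(Y,A,C)$, then over $Y$ given $(A,C)$, then over $A$ given $C$.

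\textbf{Step 1: the weights.} The key lemma is that $\mathbb{E}[\tilde\omega_y(O)\mid Y=y',A,C]=\mathbf 1\{y=y'\}$ whenever any two of the three matrices are correct.

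The weights in Appendix~\ref{app:weights} generalize Equation~\ref{eq:binary_weight1}:
\[
\omega_y = h_{W}h_{Z}+h_{W}h_{V}+h_{Z}h_{V}-2h_{W}h_{Z}h_{V}
\]
(componentwise in $y$), where $h_X=\mathbf M_X^{-1}q_X(X)$. By Assumption~\ref{assump:ci_hidden_y}, the conditional expectation given $(Y=y',A,C)$ factorizes into a product of $e_X:=\mathbb{E}[\tilde h_{X,y}\mid y',A,C]$ over the factors present. If $\mathbf M_X$ is correct, then $e_X=\mathbf 1\{y=y'\}=:\delta$, which is idempotent.

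Suppose $W$ and $Z$ are correct and $V$ is arbitrary with value $e_V$. Then
\[
\delta+\delta e_V+\delta e_V-2\delta e_V=\delta .
\]
The same identity holds by symmetry for the other two pairs. Hence
\[
\mathbb{E}[\tilde\Omega\mid Y,A,C]=\sum_y y\,\mathbf 1\{y=Y\}=Y .
\]
One caveat: if the support $\mathcal Y_{A,C}$ is used as an index set, it must be known, which Assumption~\ref{assump:known_support} provides. I also need to check that in the general construction the bases $q_X$ are fixed and known, so only $\mathbf M_X$ can be misspecified. I would verify this directly from Appendix~\ref{app:weights}, since that appendix is where $\omega_y$ is defined for general support size.

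\textbf{Step 2: cases (i)--(iii), treatment model correct, outcome regression arbitrary.} Using Step 1,
\[
\mathbb{E}\!\left[\frac{\mathbb I(A=a)}{\pi(C)}\big(\tilde\Omega-\tilde\mu(C)\big)\,\Big|\, C\right]
=\frac{\pi(C)}{\pi(C)}\big(\mathbb{E}[Y\mid a,C]-\tilde\mu(C)\big).
\]
Adding $\tilde\mu(C)$ and taking the expectation over $C$ leaves
\[
\mathbb{E}\big[\mathbb{E}[Y\mid a,C]\big]-\psi_a=0,
\]
by Equation~\ref{eq:adjustment} under Assumptions~\ref{assump:positivity}--\ref{assump:ignorability}.

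\textbf{Step 3: cases (iv)--(vi), outcome regression correct, treatment model arbitrary.} Here the inner term $\mathbb{E}[\tilde\Omega-\mu(C)\mid A=a,C]$ equals $0$ by Step 1, whatever $\tilde\pi$ is; this requires only that $\tilde\pi$ be bounded away from zero. What remains is $\mathbb{E}[\mu(C)]-\psi_a=0$.

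\textbf{Main obstacle.} The main obstacle is Step 1 in the general (non-binary) construction. There I must confirm that the appendix's symmetric combination of products still reduces to $\delta$ when exactly one factor has an arbitrary expectation. The algebraic identity $\delta(2e-2e)=0$ above is what I expect to carry over, applied componentwise to each $y$.
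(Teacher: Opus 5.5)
Your proposal is correct and follows the paper's own proof. The paper also shows that two correctly specified moment matrices give $\mathbb{E}[\Omega(O;\tilde\eta)\mid Y,A,C]=Y$ via Assumption~\ref{assump:ci_hidden_y}, and then reduces $\mathbb{E}[\phi_{\mathrm{obs}}]$ to the mean of the full-data influence function, which vanishes by its double robustness. Your idempotence computation $\delta+2\delta e-2\delta e=\delta$ makes explicit the step the paper only asserts, and since it holds componentwise for each $(i,j)$ and any support size $R$, the ``obstacle'' you flag is already resolved.
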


Another key advantage of influence function-based estimators is that they can achieve \(\sqrt{n}\)-consistency and asymptotic normality even when the nuisance functions are not individually estimated at \(\sqrt{n}\) rates. Instead, their asymptotic behavior is governed by higher-order terms involving products of nuisance estimation errors \citep{KennedyTutorial}. When \(Y\) has finite support, we state the corresponding result for our observed-data influence function below.

\begin{theorem}[Asymptotic Normality]\label{thm:efficiency}
Let \(\phi_{\mathrm{obs}}(O; \psi_a, \eta)\) denote the observed-data influence function in Theorem~\ref{thm:obsIF}, where $Y$ has finite support and the weights \(\omega_y(O;\eta)\) are constructed as in Appendix~\ref{app:weights}. If conditions (i)--(v) hold, then, under standard regularity conditions, including sample splitting, the estimator \(\hat{\psi}_a\) obtained from Equation~\ref{eq:estimation} satisfies
\[
\sqrt{n}\,(\hat{\psi}_a - \psi_a)
\;\;\overset{d}{\longrightarrow}\;\;
\mathcal{N}\!\Big(0, \,\mathbb{E}\big[\phi_{\mathrm{obs}}^2(O; \psi_a, \eta)\big]\Big).
\]

We provide a proof in Appendix~\ref{app:efficiency}.

\renewcommand{\theenumi}{\roman{enumi}}%
\begin{enumerate}
  \item $\lVert \hat{p}(A \mid C) - p(A \mid C) \rVert = o_p(1)$, 
  $\lVert \hat{\mathbb{E}}(Y \mid A, C) - \mathbb{E}(Y \mid A, C) \rVert = o_p(1)$, \\ 
  $\left\lVert
\widehat{\mathbf M}_W(A,C)-\mathbf M_W(a,c)
\right\rVert
=
o_p(1)$,  
  $\left\lVert
\widehat{\mathbf M}_Z(A,C)-\mathbf M_Z(A,C)
\right\rVert
=
o_p(1)$, \\
 $\left\lVert
\widehat{\mathbf M}_V(A,C)-\mathbf M_V(A,C)
\right\rVert
=
o_p(1)$;
 \item $\lVert \hat{p}(A \mid C) - p(A \mid C) \rVert$ 
  $\lVert \hat{\mathbb{E}}(Y \mid A, C) - \mathbb{E}(Y \mid A, C) \rVert = o_p(n^{-1/2})$;  
  \item $\left\lVert
\widehat{\mathbf M}_W(A,C)-\mathbf M_W(A,C)
\right\rVert$ 
  $\left\lVert
\widehat{\mathbf M}_Z(A,C)-\mathbf M_Z(A,C)
\right\rVert = o_p(n^{-1/2})$; 
  \item $\left\lVert
\widehat{\mathbf M}_W(A,C)-\mathbf M_W(A,C)
\right\rVert$ 
  $\left\lVert
\widehat{\mathbf M}_V(A,C)-\mathbf M_V(A,C)
\right\rVert = o_p(n^{-1/2})$;  
  \item $\left\lVert
\widehat{\mathbf M}_Z(A,C)-\mathbf M_Z(A,C)
\right\rVert$ 
  $\left\lVert
\widehat{\mathbf M}_V(A,C)-\mathbf M_V(A,C)
\right\rVert = o_p(n^{-1/2})$.

\end{enumerate}

\end{theorem}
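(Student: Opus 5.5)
The argument follows the usual cross-fitted one-step analysis. I decompose \(\hat\psi_a-\psi_a\) into an empirical-process term, a remainder, and the main term; the work is to show the remainder has a product structure matching conditions (ii)--(v).

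\textbf{Step 1: the three-term decomposition.} Write \(\phi_{\mathrm{obs}}(O;\psi_a,\eta)=\varphi(O;\eta)-\psi_a\), so that Equation~\ref{eq:estimation} gives \(\hat\psi_a=\mathbb{P}_n\varphi(O;\hat\eta)\). Then
\[
\hat\psi_a-\psi_a=(\mathbb{P}_n-P)\{\varphi(O;\eta)\}+(\mathbb{P}_n-P)\{\varphi(O;\hat\eta)-\varphi(O;\eta)\}+R(\hat\eta,\eta),
\]
where \(R(\hat\eta,\eta)=P\varphi(O;\hat\eta)-\psi_a\).
\begin{itemize}
\item The first term, scaled by \(\sqrt n\), converges to \(\mathcal N(0,\mathbb{E}[\phi_{\mathrm{obs}}^2])\) by the CLT.
\item The second term is \(o_p(n^{-1/2})\) under sample splitting, by the standard conditional Chebyshev argument. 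This needs \(\|\varphi(\cdot;\hat\eta)-\varphi(\cdot;\eta)\|_{L^2}=o_p(1)\).
\item That \(L^2\) convergence follows from condition (i), using the regularity conditions: \(\hat p(A\mid C)\) bounded away from zero, \(\widehat{\mathbf M}_X\) with singular values bounded away from zero, and bounded bases \(q_X\). These make \(\varphi\) Lipschitz in the nuisances.
\end{itemize}

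\textbf{Step 2: isolating the propensity/outcome product.} Condition on \((Y,A,C)\) inside \(R\). Write \(\pi=p(a\mid C)\) and \(\mu=\mathbb{E}[Y\mid a,C]\), with hats for their estimates. Then
\[
R=\mathbb{E}\Big[\tfrac{\pi}{\hat\pi}\big(\mathbb{E}[\hat\Omega\mid Y,a,C]-\hat\mu\big)+\hat\mu\Big]-\psi_a .
\]
Add and subtract \(Y\) in the bracket. This splits \(R\) into two pieces:
\[
\mathbb{E}\big[\tfrac{\pi-\hat\pi}{\hat\pi}(\mu-\hat\mu)\big]
\quad\text{and}\quad
\mathbb{E}\big[\tfrac{\pi}{\hat\pi}\{\mathbb{E}[\hat\Omega-Y\mid Y,a,C]\}\big].
\]
The first piece is bounded by Cauchy--Schwarz by \(\|\hat\pi-\pi\|\,\|\hat\mu-\mu\|\), which is \(o_p(n^{-1/2})\) by condition (ii).

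\textbf{Step 3: the weight bias, the main obstacle.} It remains to show \(\mathbb{E}[\hat\omega_y\mid Y=y',a,C]-\mathbf 1\{y=y'\}\) is bounded by sums of pairwise products of matrix errors. I follow the structure of Appendix~\ref{app:weights} and Equation~\ref{eq:binary_weight1}. There \(\omega_y\) is the symmetric combination
\[
h_Wh_Z+h_Wh_V+h_Zh_V-2h_Wh_Zh_V
\]
of the components \(h_X=\mathbf M_X^{-1}q_X\). (In general it is the analogous inclusion–exclusion form.)

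By Assumption~\ref{assump:ci_hidden_y}, conditional on \((Y=y',a,C)\) the \(\hat h_X\) are independent. Write their means as \(\mathbb{E}[\hat h_{X,y}\mid y',a,C]=\delta_{yy'}+e_X\), where
\[
e_X=\big(\widehat{\mathbf M}_X^{-1}\mathbf M_X-I\big)_{y y'} .
\]
Using \(\widehat{\mathbf M}_X^{-1}-\mathbf M_X^{-1}=-\widehat{\mathbf M}_X^{-1}(\widehat{\mathbf M}_X-\mathbf M_X)\mathbf M_X^{-1}\), we have \(|e_X|\lesssim\|\widehat{\mathbf M}_X-\mathbf M_X\|\).

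Now expand \(\mathbb{E}[\hat\omega_y\mid y',a,C]\) as a polynomial in \((e_W,e_Z,e_V)\), with \(d=\delta_{yy'}\in\{0,1\}\), so that \(d^2=d\) and \(d^3=d\).
\begin{itemize}
\item The constant term equals \(3d-2d=d\).
\item The coefficient of \(e_W\) is \(2d-2d=0\), and symmetrically for \(e_Z\) and \(e_V\). So all linear terms cancel.
\item What remains is \((1-2d)(e_We_Z+e_We_V+e_Ze_V)-2e_We_Ze_V\).
\end{itemize}
This cancellation is the delicate step. In the general finite-support construction I would verify it in the same way, showing the weight is multilinear with a vanishing first-order term; that is the part most likely to need care.

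Multiply by the bounded \(\pi/\hat\pi\), sum over \(y\) weighted by \(y\), and apply Cauchy--Schwarz. This bounds the second piece by \(\|\widehat{\mathbf M}_W-\mathbf M_W\|\|\widehat{\mathbf M}_Z-\mathbf M_Z\|\) plus the two analogous products, plus a cubic term dominated by them. These are \(o_p(n^{-1/2})\) by conditions (iii)--(v). Combining with Step 2 gives \(R=o_p(n^{-1/2})\), and Slutsky's theorem completes the argument.
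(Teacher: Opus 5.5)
Your proposal is correct and follows essentially the same route as the paper's proof. Both arguments use the same three-term decomposition with sample splitting for the empirical-process term, and both split the remainder into the propensity/outcome product and the weight bias. Both then expand $\mathbb{E}[\hat\omega_y\mid Y=y',a,C]$ using conditional independence of $(W,Z,V)$, so that the linear error terms cancel and only pairwise and cubic products of matrix errors remain, which (iii)--(v) control.

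Your hedge about the general case is unnecessary, since Appendix~\ref{app:weights} uses exactly the same combination $h_Wh_Z+h_Wh_V+h_Zh_V-2h_Wh_Zh_V$ for every finite support. Your explicit Cauchy--Schwarz step in $L^2(P_C)$ is a more careful version of the paper's pointwise claim that the conditional mean equals $y_j+o_p(n^{-1/2})$.
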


The multiple robustness and asymptotic normality results in Theorems~\ref{thm:robustness} and~\ref{thm:efficiency} extend directly to the average treatment effect estimator \(\hat\psi = \hat\psi_1 - \hat\psi_0\), rendering it consistent under proper specification of any of the nuisance sets in Theorem~\ref{thm:robustness}, and asymptotically normal with asymptotic variance
\(
\mathbb{E}\Big[\big\{\phi_{\mathrm{obs}}(O; \psi_1, \eta) - \phi_{\mathrm{obs}}(O; \psi_0, \eta)\big\}^2\Big]
\) under the same conditions as Theorem~\ref{thm:efficiency}.

\section{Simulation}
We evaluate the performance of the influence function-based estimator for the average causal effect under two simulated data-generating processes (DGPs): 
(1) a fully binary setting; (2) a setting with continuous confounders \(C\), continuous proxies (\(W\), \(Z\), and \(V\)), and a binary hidden outcome \(Y\); and (3) a setting with continuous confounders \(C\), continuous proxies (\(W\), \(Z\), and \(V\)), and a three-category hidden outcome \(Y\). We select these settings, where $Y$ has finite support, to demonstrate the finite-sample behavior of our proposed influence function-based estimators when \(\Omega(O;\eta)\) admits a relatively tractable characterization.

\subsection{Binary Data-Generating Process}\label{subsec:binary_dgp}
The fully binary data-generating process we simulate is described in Appendix~\ref{app:binary_dgp}.  
We compare three estimators:
\begin{enumerate}
   \item \textbf{Proposed estimator:} computes \(\hat{\psi} = \hat{\psi}_1 - \hat{\psi}_0\), where each \(\hat{\psi}_a\) is obtained by solving Equation~\ref{eq:estimation} using the observed-data influence function \(\phi_{\mathrm{obs}}(O; \psi_a, \eta)\) from Theorem~\ref{thm:obsIF}, with weights \(\omega_y(O; \eta)\) constructed in the same manner as in Example~\ref{ex:binary}. Nuisance functions appearing in \(\phi_{\mathrm{obs}}(O; \psi_a, \eta)\) and \(\omega_y(O; \eta)\), including \(\mathbb{E}[Y \mid A,C]\) and proxy regression functions \(\mathbb{E}[W \mid Y]\), \(\mathbb{E}[Z \mid Y]\), and \(\mathbb{E}[V \mid Y]\), are estimated using procedures described in Appendix~\ref{app:binary_dgp}. 
    \item \textbf{Oracle estimator:} has direct access to \(Y\) and computes \(\hat\psi = \hat\psi_1 - \hat\psi_0\), where each \(\hat\psi_a\) is obtained by solving
    \[
    \mathbb{P}_n\{\phi_{\mathrm{full}}(A,Y,C; \hat\psi_a, \hat\eta)\} = 0,
    \]
    with \(\phi_{\mathrm{full}}(A,Y,C; \psi_a, \eta)\) given in Equation~\ref{eq:full_IF};

    \item \textbf{Naïve estimator:} mirrors the oracle approach but substitutes the unobserved outcome \(Y\) with the proxy \(W\), and substitutes the nuisance function \(\mathbb{E}[Y \mid A=a, C]\) with \(\mathbb{E}[W \mid A=a, C]\) in \(\phi_{\mathrm{full}}(A,Y,C; \psi_a, \eta)\) given in Equation~\ref{eq:full_IF}.
\end{enumerate}
For each, we report the mean estimate, bias, and scaled variance (\(n \times \mathrm{Var}\)).

Results appear in Table~\ref{tab:sim_results_binary}. The naïve method substantially underestimates the target parameter across sample sizes. The proposed estimator eliminates bias. This correction comes at the cost of higher variance compared to the oracle (scaled variance around 1.7 vs 0.6), reflecting the efficiency loss incurred when the latent outcome must be ``reconstructed'' from observed proxies through the proposed weighting procedure.

\begin{table}[ht]
\centering
\large
\caption{ \large Binary DGP}
\label{tab:sim_results_binary}
\begin{tabular}{lcccc}
\hline
Method & Sample Size ($n$) & Mean & Bias & $n \times$ Variance \\
\hline
Proposed
        & 200  & 0.591 &  -0.001 & 1.732 \\
        & 500  & 0.592 & -0.001 & 1.671 \\
         & 1000 & 0.591 & -0.002 & 1.734 \\
\hline
Naïve & 200  & 0.475 & -0.118 & 0.796 \\
        & 500  & 0.472 & -0.120 & 0.748 \\
         & 1000 & 0.474 & -0.119 & 0.708 \\
\hline
Oracle & 200  & 0.593 & 0.001 & 0.599 \\
        & 500  & 0.593 & 0.001 & 0.582 \\
         & 1000 & 0.592 & 0.000 & 0.609 \\

\hline
\end{tabular}
\end{table}

\subsection{Mixed DGP 1 (Binary Outcome; Continuous Confounders and Proxies)} 

 The first mixed data-generating process (with a binary outcome, and continuous confounder and proxies) we simulate is described in Appendix~\ref{app:mixed_dgp1}. We construct estimators similar to those considered in Section~\ref{subsec:binary_dgp}. For our proposed influence function-based estimator, the weights
\(\omega(O;\eta)\) are constructed in the same manner as Example~\ref{ex:binary}, and nuisance functions are
estimated using the procedure described in Appendix~\ref{app:mixed_dgp1}.

Results appear in Table~\ref{tab:sim_results_cont1}, with similar trends to the binary data-generating process, but higher variance for the proposed estimator. The proposed estimator already exhibits little bias (0.015) at sample size \(n=200\), although \(n \times \mathrm{Var}\) is large (approximately \(132.6\) at \(n=200\)), decreasing considerably at \(n=500\) and \(n=1000\) (to \(22.9\) and \(25.5\) , respectively).

\begin{table}[ht]
\centering
\large
\caption{ \large Binary $Y$, Continuous \(C,W,Z,V\)}
\label{tab:sim_results_cont1}
\begin{tabular}{lcccc}
\hline
Method & Sample Size ($n$) & Mean & Bias & $n \times$ Variance \\
\hline
Proposed
        & 200  & 0.463 &  0.015 & 132.602 \\
        & 500  & 0.445 & -0.002 & 22.897 \\
         & 1000 & 0.441 & -0.007 & 25.533 \\
\hline
Naïve & 200  & 0.681 & 0.234 & 7.211 \\
        & 500  & 0.676 & 0.229 & 6.796 \\
         & 1000 & 0.669 & 0.222 & 7.121 \\
\hline
Oracle & 200  & 0.447 & 0.000 & 0.773 \\
        & 500  & 0.447 & 0.000 & 0.821 \\
         & 1000 & 0.447 & -0.001 & 0.803 \\

\hline
\end{tabular}
\end{table}

\subsection{Mixed DGP 2 (Three-category Outcome; Continuous Confounders and Proxies)} The second mixed data-generating process (with a three-category outcome, and continuous confounder and proxies) that we simulate is described in Appendix~\ref{app:mixed_dgp2}. We construct estimators similar to those considered in Section~\ref{subsec:binary_dgp}. For our proposed influence function-based estimator, the weights
\(\omega(O;\eta)\) are constructed analogously to Example~\ref{ex:binary}
using three polynomial basis functions, and nuisance functions are
estimated using the procedure described in Appendix~\ref{app:mixed_dgp2}.

Results appear in Table~\ref{tab:sim_results_cont2}. At \(n=200\), the proposed estimator has bias \(0.063\) and \(n\times\mathrm{Var}\) around \(916.2\). Performance improves markedly with increasing sample size. At \(n=500\), bias decreases to \(-0.001\) and \(n \times \mathrm{Var}\) decreases to around \(11.8\). Similar results are observed at \(n=1000\), where bias is \(0.002\) and \(n \times \mathrm{Var}\) approximately \(11.9\).

\begin{table}[ht]
\centering
\large
\caption{ \large Three-category $Y$, Continuous $C,W,Z,V$}
\label{tab:sim_results_cont2}
\begin{tabular}{lcccc}
\hline
Method & Sample Size ($n$) & Mean & Bias & $n \times$ Variance \\
\hline
Proposed
        & 200  & 0.602 &  0.063 & 916.194 \\
        & 500  & 0.538 & -0.001 & 11.804 \\
         & 1000 & 0.541 & 0.002 & 11.895 \\
\hline
Naïve & 200  & 2.316 & 1.776 & 52.231 \\
        & 500  & 2.337 & 1.797 & 45.914 \\
         & 1000 & 2.317 & 1.777 & 48.849 \\
\hline
Oracle & 200  & 0.536 & -0.004 & 2.399 \\
        & 500  & 0.541 & 0.001 & 2.252 \\
         & 1000 & 0.538 & -0.002 & 2.362 \\

\hline
\end{tabular}
\end{table}
\section{Conclusion}

In this work, we study causal inference in the presence of hidden variables through a proxy-based lens. Leveraging mild structural assumptions, we show that it is possible to recover the full-data law from the observed-data law. Building on this identification strategy, we develop an influence function-based approach for estimating causal effects, yielding examples of estimators that are $\sqrt{n}$-consistent and exhibit multiple robustness. Notably, our framework avoids reliance on unbiased proxies, partial observation, or strong parametric restrictions.

A number of key directions remain to be explored. First, the multiple robustness and product-rate bias properties established here under outcomes $Y$ with finite support have not yet been developed for the continuous outcome setting. Second, while our framework assumes three proxies, many applications feature multiple candidate proxies. Understanding how to optimally combine information across these proxies for efficient estimation remains an open problem. Third, developing principled modeling strategies and learning procedures that reliably attain product rates is an important challenge, particularly in high-dimensional or nonparametric settings. Finally, our results hint at a more general theory for influence functions under proxy-based full-data law identification, including systematic construction of influence function-based estimators across a broad class of target functionals. Addressing these questions would strengthen the connection between identification theory and practical, scalable estimation.

\bibliographystyle{imsart-nameyear}
\bibliography{references}

\newpage
\appendix

\section{Proof of Full-data Law Identification~\ref{thm:obsIF}}\label{app:identification}

Under Assumptions~\ref{assump:boundedness_hidden_Y}-\ref{assump:unique_mapping}, \(p(Y,W,Z,V \mid A,C)\) is identified by Theorem 1 in \citet{hu2008instrumental}. Then, since \(p(A,C)\) is identified from the observed data distribution, the full-data law of hidden and observed variables \(p(A,Y,C,W,Z,V)\) is identified.

\section{Proof of Proposition~\ref{prop:guaranteed_weight}}\label{app:guaranteed_weight}

\begin{proof}
By Assumptions~\ref{assump:completeness_hidden_Y} and the strengthening of Assumption~\ref{assump:distinctness_hidden_Y}, for each \((a,c)\), the conditional laws in
\(\{
p(W \mid Y=y, A=a, C = c):
y\}
\)
are linearly independent. 

Let \(R(A,C)\), denoted in shorthand by \(R\), be the number of support points in
\(\mathcal Y_{A,C}\), and define the dominating measure
\[
\nu_{Y,a,c}
=
\sum_{i=1}^{R(a,c)}
p(W \mid Y=y_i,A=a,C=c).
\]
Let
\[
q_{W,i}(W,a,c)
=
\frac{
dp(W \mid Y=y_i,a,c)
}{
d\nu_{Y,a,c}
}(W)
\]
denote the \(i\)th Radon--Nikodym derivative. Write
\[
q_W(W,a,c)
=
\begin{pmatrix}
q_{W,1}(W,a,c)\\
\vdots\\
q_{W,R}(W,a,c)
\end{pmatrix},
\]
and define
\[
\mathbf{M}_W(a,c)
=
\int
q_W(w,a,c)q_W(w,a,c)^\top
\,d\nu_{Y,a,c}(w).
\]

Because the conditional laws in
\(\{
p(W \mid Y=y, A=a, C = c):
y\}
\)
are linearly independent, the Radon--Nikodym derivatives in \(\{
q_{W,i}(W,a,c):i\}
\) are also linearly independent. Hence, \(\mathbf{M}_W(\vec a,c)\) is invertible.

Now define
\[
h_{W,i}(W,a,c; \eta)
=
e_i^\top
\mathbf{M}_W(a,c)^{-1}
q_W(W,a,c),
\]
where \(e_i\) is the \(i\)th standard basis vector. Then

\begin{align*}
&\mathbb E\!\left[
h_{W,i}(W,A,C; \eta)
\,\middle|\,
Y=y_j,a,c
\right] \\
&\qquad=
e_i^\top
\mathbf{M}_W(a,c)^{-1}
\int
q_W(w,a,c)q_{W,j}(w,a,c)
\,d\nu_{Y,a,c}(w)\\
&\qquad=
e_i^\top
\mathbf{M}_W(a,c)^{-1}
\mathbf{M}_W(a,c)
e_j\\
&\qquad=
e_i^\top e_j\\
&\qquad=
\mathbf 1\{i=j\}.
\end{align*}

Similarly, define \(q_Z(Z,A,C)\) to construct \(h_{Z,i}(Z,A,C; \eta)\) and \(q_V(V,A,C)\) to construct \(h_{V,i}(V,A,C; \eta)\). Then combining terms,
\[
\begin{aligned}
&\omega_{y_i}(O;\eta)
=
h_{W,i}(W,A,C;\eta)h_{Z,i}(Z,A,C;\eta)
+
h_{W,i}(W,A,C;\eta)h_{V,i}(V,A,C;\eta) \\
&+
h_{Z,i}(Z,A,C;\eta)h_{V,i}(V,A,C;\eta)
-
2h_{W,i}(W,A,C;\eta)h_{Z,i}(Z,A,C;\eta)h_{V,i}(V,A,C;\eta),
\end{aligned}
\]
satisfies Equation~\ref{eq:indicator} in Proposition~\ref{prop:guaranteed_weight}. Hence,
\(
\Omega(O;\eta)
=
\sum_{y_i\in\mathcal Y_{A,C}}
y_i\,\omega_{y_i}(O;\eta),
\) satisfies Assumption~\ref{assump:weight}.
\end{proof}

\section{General Weight Construction for $Y$ with Finite Support}\label{app:weights}
Let \(R(A,C)\), denoted in shorthand by \(R\), be the number of support points in
\(\mathcal Y_{A,C}\).
Let
\[
q_W(W,A,C)
=
\bigl(
q_{W,1}(W),
\dots,
q_{W,N}(W)
\bigr)^\top
\]
be a vector of \(N \geq R(A,C)\) basis functions of \(W\) such that the conditional moment matrix
\[
\mathbf{M}_W(A,C)
=
\begin{pmatrix}
\mathbb{E}[q_{W,1}(W)\mid y_1,A,C] & \cdots &
\mathbb{E}[q_{W,1}(W)\mid y_R,A,C]
\\
\vdots & & \vdots
\\
\mathbb{E}[q_{W,N}(W)\mid y_1,A,C] & \cdots &
\mathbb{E}[q_{W,N}(W)\mid y_R,A,C]
\end{pmatrix}
\]
is left invertible.

Assumption~\ref{assump:completeness_hidden_Y} implies that the conditional laws in
\[
\{p(W\mid y,A,C): y\in\mathcal{Y}_{A,C}\}
\]
are linearly independent. Hence, when \(W\) has finite support with support
\(\{w_1,\dots,w_N\}\), a natural choice of $q_W(W)$ is
\[
q_W(W)
=
\bigl(
\mathbf{1}\{W=w_1\},
\dots,
\mathbf{1}\{W=w_N\}
\bigr)^\top,
\]
in which case \(\mathbf{M}_W(A,C)\) coincides with the matrix whose columns are linearly independent conditional distributions, so it is left invertible. When \(W\) is continuous, \(q_W(W,A,C)\) may be constructed from polynomial, spline, or other function classes.

Define
\[
h_W(W,A,C; \eta)
=
\mathbf{M}_W(A,C)^{-1}q_W(W,A,C),
\]
where
\[ \mathbf{M}_W(A,C)^{-1} \text{ is the left inverse of } \mathbf{M}_W(A,C).
\]

Since \(\mathbb{E}[q_W(W,A,C)\mid Y=y_j, A,C]
=
\mathbf M_W(A,C)e_j,
\) where \(e_j\) denotes the \(j\)-th standard basis vector,
it follows that the \(i\)-th component of \(h_W(W,A,C; \eta)\), denoted
\(h_{W,A,C,i}(W; \eta)\), satisfies
\[
\mathbb E[h_{W,i}(W,A,C; \eta)\mid Y=y_j,A,C]
=
\mathbf 1\{i=j\}.
\]

Similarly, define \(q_Z(Z,A,C)\) to construct \(h_{Z,i}(Z,A,C; \eta)\) and \(q_V(V,A,C)\) to construct \(h_{V,i}(V,A,C; \eta)\). Then combining terms,
\[
\begin{aligned}
&\omega_{y_i}(O;\eta)
=
h_{W,i}(W,A,C;\eta)h_{Z,i}(Z,A,C;\eta)
+
h_{W,i}(W,A,C;\eta)h_{V,i}(V,A,C;\eta) \\
&+
h_{Z,i}(Z,A,C;\eta)h_{V,i}(V,A,C;\eta)
-
2h_{W,i}(W,A,C;\eta)h_{Z,i}(Z,A,C;\eta)h_{V,i}(V,A,C;\eta),
\end{aligned}
\]
satisfies Equation~\ref{eq:indicator} in Proposition~\ref{prop:guaranteed_weight}. Hence,
\(
\Omega(O;\eta)
=
\sum_{y_i\in\mathcal Y_{A,C}}
y_i\,\omega_{y_i}(O;\eta),
\) satisfies Assumption~\ref{assump:weight}.

\section{Proof of Proposition~\ref{prop:polynomial_basis}}\label{app:polynomial_basis}

In the location-shift model of Proposition~\ref{prop:polynomial_basis}, the polynomial basis provides an explicit choice of \(q_X(X,A,C)\) for each \(X \in \{W,Z,V\}\). Indeed, since
\[
X \mid Y=y_i,A,C
=
\mu_{X,i}(A,C)+\epsilon_X(A,C),
\]
where \(\epsilon_X(A,C)\) does not depend on \(Y\), then 
\[
\mathbb E[X^k \mid Y=y_i,A,C] = \mathbb{E}[\{\mu^X_i(A,C) + \epsilon_X(A,C)\}^k].
\] 

The conditional moment matrix \(\mathbf M_X(A,C)\) in Appendix~\ref{app:weights}, formed using the basis
\[
q_X(X,A,C)
=
(1,X,\ldots,X^{R-1})^\top,
\]
has entries
\[
\mathbf M_X(A,C)_{\ell i}
=
\mathbb E[X^{\ell-1}\mid Y=y_i,A,C], \quad \ell = 1, \cdots, R.
\]

By the binomial expansion, \(\mathbf M_X(A,C)\) is an invertible transformation of the Vandermonde matrix 
\[
\mathbf V_X(A,C)
=
\begin{pmatrix}
1 & 1 & \cdots & 1\\
\mu_1^X(A,C) & \mu_2^X(A,C) & \cdots & \mu_R^X(A,C)\\
\{\mu_1^X(A,C)\}^2 & \{\mu_2^X(A,C)\}^2 & \cdots & \{\mu_R^X(A,C)\}^2\\
\vdots & \vdots & & \vdots\\
\{\mu_1^X(A,C)\}^{R-1} &
\{\mu_2^X(A,C)\}^{R-1} &
\cdots &
\{\mu_R^X(A,C)\}^{R-1}
\end{pmatrix}.
\]

Since the values
\[
\mu_1^X(A,C),\ldots,\mu_R^X(A,C)
\]
are distinct, \(\mathbf V_X(A,C)\) has full rank \(R\). Since invertible linear transformations preserve rank, \(\mathbf M_X(A,C)\) also has full rank \(R\), and hence is left invertible.

 Therefore the functions \(h_{W,i}(W,A,C;\eta)\), \(h_{Z,i}(Z,A,C;\eta)\), and \(h_{V,i}(V,A,C;\eta)\) in Appendix~\ref{app:weights} can be constructed setting \(
q_X(X,A,C)
=
(1,X,\ldots,X^{R-1})^\top
\) for each \(X \in \{W,Z,V\}\). These functions constitute \(\omega_{y_i}(O; \eta)\) satisfying Equation~\ref{eq:indicator} in Proposition~\ref{prop:guaranteed_weight}. 

\section{Proof of Full-data Influence Function}\label{app:full_IF}
\begin{proof}
Let $\{P_\epsilon(H) : \epsilon \in \mathbb{R}\}$ be a regular parametric submodel of the full-data law with corresponding densities \(p_\epsilon(H=(A,Y,C,W,Z,V))\) and score
\[
s(H) = \left.\frac{\partial}{\partial \epsilon} \log p_\epsilon(H)\right|_{\epsilon=0}.
\]

Under Assumptions~\ref{assump:positivity}-\ref{assump:ignorability}, we aim to show $\phi_\mathrm{full}(A,Y,C; \psi_a, \eta)$ from Equation~\ref{eq:full_IF} is a full-data influence function for target \(\psi_a = \mathbb{E}[Y(a)] = \int p(Y|A=a,C=c)p(C=c)dc\), i.e., that 
\begin{equation}\label{eq:full_if_proof}
\left.\frac{d}{d\epsilon}\psi_a(P_\epsilon(H))\right|_{\epsilon=0}
= \mathbb{E}\big[\phi_{\mathrm{full}}(A,Y,C; \psi_a, \eta)\, s(H)\big],
\qquad 
\mathbb{E}\big[\phi_{\mathrm{full}}(A,Y,C; \psi_a, \eta)\big]=0.
\end{equation}

The regular parametric submodel $\{P_\epsilon(H) : \epsilon \in \mathbb{R}\}$ of the full-data law induces a corresponding regular parametric submodel $\{P_\epsilon(A,Y,C) : \epsilon \in \mathbb{R}\}$ of the marginal law of \((A,Y,C)\) via marginalization over \((W,Z)\). The induced submodel $\{P_\epsilon(A,Y,C) : \epsilon \in \mathbb{R}\}$ is indexed by the same parameter \(\epsilon\), with score
\[
s(A,Y,C)
=
\left.
\frac{\partial}{\partial \epsilon}
\log p_\epsilon(A,Y,C)
\right|_{\epsilon=0}
=
\mathbb{E}[s(H)\mid A,Y,C].
\]

Then because \(\psi_a\) is a functional of the marginal law, 

\[
\left.\frac{d}{d\epsilon}\psi_a(P_\epsilon(H))\right|_{\epsilon=0}
=\left.\frac{d}{d\epsilon}\psi_a(P_\epsilon(A,Y,C))\right|_{\epsilon=0}.
\]

Since \(\phi_\mathrm{full}(A,Y,C; \psi_a, \eta)\) is an influence function in the marginal model \citep{KennedyTutorial, robins94estimation}, 
\[
\begin{aligned}
\left.\frac{d}{d\epsilon}\psi_a(P_\epsilon(H))\right|_{\epsilon=0}
&=\left.\frac{d}{d\epsilon}\psi_a(P_\epsilon(A,Y,C))\right|_{\epsilon=0}
= \mathbb{E}\big[
\phi_{\mathrm{full}}(A,Y,C; \psi_a,\eta)\,
s(A,Y,C)
\big] \\
&= \mathbb{E}\big[\phi_{\mathrm{full}}(A,Y,C)\,\mathbb{E}\{s(Y\mid A,C) \mid Y,A,C\}\big]  \\
&\quad + \mathbb{E}\big[\phi_{\mathrm{full}}(A,Y,C)\,\mathbb{E}\{s(A\mid C) \mid Y,A,C\}\big] \\
&\quad + \mathbb{E}\big[\phi_{\mathrm{full}} (A,Y,C)\,\mathbb{E}\{s(C)\mid Y,A,C\}\big] \\
&\quad + \mathbb{E}\Big[
\phi_{\mathrm{full}}(A,Y,C)\,
\mathbb{E}\{s(W\mid Y,A,C)\mid Y,A,C\}
\Big] \\
&\quad + \mathbb{E}\Big[
\phi_{\mathrm{full}}(A,Y,C)\,
\mathbb{E}\{s(Z\mid Y,A,C)\mid Y,A,C\}
\Big] \\
&\quad + \mathbb{E}\Big[
\phi_{\mathrm{full}}(A,Y,C)\,
\mathbb{E}\{s(V\mid Y,A,C)\mid Y,A,C\}
\Big] \\
&= \mathbb{E}\big[\phi_{\mathrm{full}}(A,Y,C)\,s(Y\mid A,C)\big]  + \mathbb{E}\big[\phi_{\mathrm{full}}(A,Y,C)\,s(A\mid C)\big] \\
&\quad + \mathbb{E}\big[\phi_{\mathrm{full}}(A,Y,C)\,s(C)\big] \\
&\quad + \mathbb{E}\big[\phi_{\mathrm{full}}(A,Y,C)\,s(W\mid Y,A,C)\big] \\
&\quad + \mathbb{E}\big[\phi_{\mathrm{full}}(A,Y,C)\,s(Z\mid Y,A,C)\big] \\
&\quad + \mathbb{E}\big[\phi_{\mathrm{full}}(A,Y,C)\,s(V\mid Y,A,C)\big] \\
&= \mathbb{E}\big[
\phi_{\mathrm{full}}(A,Y,C; \psi_a,\eta)\,
s(H)
\big].
\end{aligned}
\]
satisfying the first part of Equation~\ref{eq:full_if_proof}.

We know the second part of Equation~\ref{eq:full_if_proof} from former results \citep{KennedyTutorial, robins94estimation}.
\end{proof}
\section{Proof of Theorem~\ref{thm:obsIF}}\label{app:obsIF}
\begin{proof}

Let $\{P_\epsilon(O) : \epsilon \in \mathbb{R}\}$ be a regular parametric submodel of the observed-data law with score
\[
s(O) = \left.\frac{\partial}{\partial \epsilon} \log p_\epsilon(O)\right|_{\epsilon=0}.
\]

Under Assumptions~\ref{assump:positivity}-\ref{assump:weight}, we aim to show $\phi_\mathrm{obs}(O; \psi_a, \eta)$ from Theorem~\ref{thm:obsIF} is an observed-data influence function for target \(\psi_a = \mathbb{E}[Y(a)]\), i.e., that
\begin{equation}\label{eq_obs_if_proof}
\left.\frac{d}{d\epsilon}\psi_a(P_\epsilon(O))\right|_{\epsilon=0}
= \mathbb{E}\big[\phi_{\mathrm{obs}}(O; \psi_a, \eta)\, s(O)\big],
\qquad 
\mathbb{E}\big[\phi_{\mathrm{obs}}(O; \psi_a, \eta)\big]=0.
\end{equation}

The regular parametric submodel
\(\{P_\epsilon(O) : \epsilon \in \mathbb{R}\}\)
of the observed-data law is induced by a regular parametric submodel
\(\{P_\epsilon(H) : \epsilon \in \mathbb{R}\}\)
of the full-data law through marginalization of \(Y\). The full-data submodel is indexed by the same parameter \(\epsilon\), with score
\[
s(H)
=
\left.
\frac{\partial}{\partial \epsilon}
\log p_\epsilon(H)
\right|_{\epsilon=0},
\]
such that
\[
s(O)
=
\mathbb{E}[s(H)\mid O].
\]

Then because \(\psi_a\) is a functional of the observed-data law, 
\[\left.\frac{d}{d\epsilon}\psi_a(P_\epsilon(O))\right|_{\epsilon=0}
= \left.\frac{d}{d\epsilon}\psi_a(P_\epsilon(H))\right|_{\epsilon=0}.
\]
Since \(\phi_{\mathrm{full}}(H; \psi_a, \eta)\) from Equation~\ref{eq:full_IF} is an influence function in the full-data model,
\begin{align*}
\left.\frac{d}{d\epsilon}\psi_a(P_\epsilon(O))\right|_{\epsilon=0}&= \left.\frac{d}{d\epsilon}\psi_a(P_\epsilon(H))\right|_{\epsilon=0}= \mathbb{E}\big[\phi_{\mathrm{full}}(H; \psi_a, \eta)\, s(H)\big] \\ 
&= \mathbb{E}\big[\phi_{\mathrm{full}}(H)\, s(Y \mid A, C)\big] 
+ \mathbb{E}\big[\phi_{\mathrm{full}}(H)\, s(A \mid C)\big] + \mathbb{E}\big[\phi_{\mathrm{full}}(H)\, s(C)\big] \\
&\quad + \mathbb{E}\big[\phi_{\mathrm{full}}(H)\, s(W \mid Y, A, C)\big] 
+ \mathbb{E}\big[\phi_{\mathrm{full}}(H)\, s(Z \mid Y, A, C)\big] \\
&\quad
+ \mathbb{E}\big[\phi_{\mathrm{full}}(H)\, s(V \mid Y, A, C)\big] \\
&= \mathbb{E}\!\left[ \mathbb{E}\big[\phi_{\mathrm{full}}(H) \mid Y, A, C \big] s(Y \mid A, C) \right] 
+ \mathbb{E}\!\left[ \mathbb{E}\big[\phi_{\mathrm{full}}(H) \mid A, C \big] s(A \mid C) \right] \\
&\quad + \mathbb{E}\!\left[ \mathbb{E}\big[\phi_{\mathrm{full}}(H) \mid C \big] s(C) \right] 
+ \mathbb{E}\!\left[ \mathbb{E}\big[\phi_{\mathrm{full}}(H) \mid W, Y, A, C \big] s(W \mid Y, A, C) \right] \\
&\quad + \mathbb{E}\!\left[ \mathbb{E}\big[\phi_{\mathrm{full}}(H) \mid Z, Y, A, C \big] s(Z \mid Y, A, C) \right] \\
&\quad + \mathbb{E}\!\left[ \mathbb{E}\big[\phi_{\mathrm{full}}(H) \mid V, Y, A, C \big] s(V \mid Y, A, C) \right].
\end{align*}

Therefore, since the following conditional expectations hold:
\begin{align}
\mathbb{E}[\phi_{\mathrm{obs}}(O) \mid Y, A, C] &= \mathbb{E}[\phi_{\mathrm{full}}(H) \mid Y, A, C], \\
\mathbb{E}[\phi_{\mathrm{obs}}(O) \mid A, C] &= \mathbb{E}[\phi_{\mathrm{full}}(H) \mid A, C], \\
\mathbb{E}[\phi_{\mathrm{obs}}(O) \mid C] &= \mathbb{E}[\phi_{\mathrm{full}}(H) \mid C], \\
\mathbb{E}[\phi_{\mathrm{obs}}(O) \mid W, Y, A, C] &= \mathbb{E}[\phi_{\mathrm{full}}(H) \mid W, Y, A, C], \\
\mathbb{E}[\phi_{\mathrm{obs}}(O) \mid Z, Y, A, C] &= \mathbb{E}[\phi_{\mathrm{full}}(H) \mid Z, Y, A, C], \\
\mathbb{E}[\phi_{\mathrm{obs}}(O) \mid V, Y, A, C] &= \mathbb{E}[\phi_{\mathrm{full}}(H) \mid V, Y, A, C],
\end{align}
we can substitute these into the earlier expression to obtain
\begin{align*}
\left.\frac{d}{d\epsilon}\psi_a(P_\epsilon)\right|_{\epsilon=0} 
&= \mathbb{E}\big[ \phi_{\mathrm{obs}}(O)\, s(Y \mid A, C) \big] 
+ \mathbb{E}\big[ \phi_{\mathrm{obs}}(O)\, s(A \mid C) \big] \\
&\quad + \mathbb{E}\big[ \phi_{\mathrm{obs}}(O)\, s(C) \big]
+ \mathbb{E}\big[ \phi_{\mathrm{obs}}(O)\, s(W \mid Y, A, C) \big] \\
&\quad + \mathbb{E}\big[ \phi_{\mathrm{obs}}(O)\, s(Z \mid Y, A, C) \big]
+ \mathbb{E}\big[ \phi_{\mathrm{obs}}(O)\, s(V \mid Y, A, C) \big] \\
&= \mathbb{E}\big[ \phi_{\mathrm{obs}}(O)\, s(H) \big].
\end{align*}

Since \(\phi_{\mathrm{obs}}(O)\) is a function of the observed data,
\begin{align*}
\left.\frac{d}{d\epsilon}\psi_a(P_\epsilon)\right|_{\epsilon=0}
&= \mathbb{E}\big[\phi_{\mathrm{obs}}(O)\, s(H)\big] \\
&= \mathbb{E}\big[ \phi_{\mathrm{obs}}(O)\, \mathbb{E}[s(H) \mid O] \big] \\
&= \mathbb{E}\big[ \phi_{\mathrm{obs}}(O)\, s(O) \big],
\end{align*}
satisfying the first part of Equation~\ref{eq_obs_if_proof}.

Next, note \(
\mathbb{E}[\Omega(O;\eta) \mid Y,A,C] = Y.\)
Therefore,
\begin{align*}
\mathbb{E}[\phi_\mathrm{obs}(O; \psi_a, \eta)] 
&= \mathbb{E}\Bigg[
   \mathbb{E}\Bigg[
      \frac{\mathbb{I}(A=a)}{p(A \mid C)} 
      \Big\{
        \Omega(O; \eta)
         - \mathbb{E}[Y \mid A=a,C]\Big\} \\
&\hspace{1cm}
      + \mathbb{E}[Y \mid A=a,C] - \psi_{a}
      \;\Bigg|\; Y=y,A=a,C=c
   \Bigg]
\Bigg] \\
&= \mathbb{E}\left[
   \frac{\mathbb{I}(A=a)}{p(A \mid C)} 
   \{\, Y - \mathbb{E}[Y \mid A=a,C] \,\}
\right]+ \mathbb{E}[Y \mid A=a,C] - \psi_{a} \\
&= \mathbb{E}[\phi_\mathrm{full}(A,Y,C; \psi_a, \eta)] = 0,
\end{align*}
satisfying the second part of Equation~\ref{eq_obs_if_proof}.
\end{proof}

\section{Proof of Theorem~\ref{thm:robustness}}\label{app:robustness}
\begin{proof}
Consider case (i), where \(p(A \mid C)\), $\mathbf{M}_W(A,C)$, and $\mathbf{M}_Z(A,C)$ are correctly specified.  

Since $\mathbf{M}_W(A,C)$ and $\mathbf{M}_Z(A,C)$ are correctly specified, and Assumption~\ref{assump:ci_hidden_y} holds,
\[
\mathbb{E}[\Omega(O;\eta) \mid Y,A,C] =\mathbb{E}[\sum_{y \in \mathcal{Y}_{A,C}} \omega_y(O; \eta) \Big(
         y \Big)\mid Y,A,C] = Y.\] 

Therefore,
\begin{align*}
\mathbb{E}[\phi_\mathrm{obs}(O; \psi_a, \eta)] 
&= \mathbb{E}\Bigg[
   \mathbb{E}\Bigg[
      \frac{\mathbb{I}(A=a)}{p(A \mid C)} 
      \Big\{
         \Omega(O;\eta)
         - \mathbb{E}[Y \mid A=a,C]\Big\} \\
&\hspace{1cm}
      + \mathbb{E}[Y \mid A=a,C] - \psi_{a}
      \;\Bigg|\; Y=y,A=a,C=c
   \Bigg]
\Bigg] \\
&= \mathbb{E}\left[
   \frac{\mathbb{I}(A=a)}{p(A \mid C)} 
   \{\, Y - \mathbb{E}[Y \mid A=a,C] \,\}
\right]+ \mathbb{E}[Y \mid A=a,C] - \psi_{a} \\
&= \mathbb{E}[\phi_\mathrm{full}(A,Y,C; \psi_a, \eta)] = 0,
\end{align*}
where the last line follows by \(p(A \mid C)\) being correctly specified and the double robustness property of \(\phi_\mathrm{full}(A,Y,C; \psi_a, \eta)\).
The proofs for cases (ii)--(vi) proceed analogously.
\end{proof}

\section{Proof of Theorem~\ref{thm:efficiency}}\label{app:efficiency}

\begin{proof}
Let $ \phi = \phi_\mathrm{obs}(O; \psi_a, \eta)$ from Theorem~\ref{thm:obsIF}. Expand \[
\hat{\psi_a} - \psi_a 
= \mathbb{P}_n \hat{\phi} - \mathbb{P}\phi
= (\mathbb{P}_n - \mathbb{P})(\hat{\phi} - \phi) + \mathbb{P}(\hat{\phi} - \phi) + (\mathbb{P}_n - \mathbb{P})\phi
= R_1 + R_2 + (\mathbb{P}_n - \mathbb{P})\phi,
\]
where 
\(
R_1 = (\mathbb{P}_n - \mathbb{P})(\hat{\phi} - \phi), 
\quad 
R_2 = \mathbb{P}(\hat{\phi} - \phi).
\)

$R_1 = o_p(n^{-1/2})$ under consistency of nuisance functions [condition (i)] and sample splitting by Proposition 2 in \citet{kennedy2020sharp}.

Below we show $R_2 = o_p(n^{-1/2})$ under conditions (i)-(vii), therefore
\[
\hat{\psi_a} - \psi_a
= (\mathbb{P}_n - \mathbb{P})\phi + o_p(n^{-1/2})
= \mathbb{P}_n\phi + o_p(n^{-1/2})
\]
which is equivalent to \(
\sqrt{n}\,(\hat{\psi_a}-\psi_a) \;\;\overset{d}{\longrightarrow}\;\; 
\mathcal{N}\!\big(0, \,\mathbb{E}[\phi^2]\big).
\)

First, define
\[
\mu^Y_{a} := \mathbb{E}[Y \mid A=a, C],
\qquad
\pi_{a}  := p(A=a \mid C).
\]

So that      
\begin{align*}
R_2 
&= \mathbb{P}(\hat{\phi} - \phi) = \mathbb{E}(\hat{\phi} - \phi) \\[6pt]
&= \mathbb{E}\!\left(
   \frac{\mathbb{I}(A=a)}{\hat{\pi}_{a}}
   \Big\{
      \sum_{y \in \mathcal{Y}_{A,C}} \omega_y(O; \hat\eta)
      \Big(
         y \Big)
      - \hat{\mu}^Y_{a}
   \Bigg\} 
   + \hat{\mu}^Y_{a} - \psi_a
\right) \\[10pt]
&= \mathbb{E}\!\left(
   \frac{\mathbb{I}(A=a)}{\hat{\pi}_{a}}
   \Bigg\{
      \sum_{y \in \mathcal{Y}_{A,C}} \omega_y(O; \hat\eta)
      \Big(
         y \Big)
      - \mu^Y_{a} + \mu^Y_{a} - \hat{\mu}^Y_{a}
   \Bigg\} 
   + \hat{\mu}^Y_{a} - \psi_a
\right) \\[10pt]
&= \mathbb{E}\!\Bigg(
   \frac{\mathbb{I}(A=a)}{\hat{\pi}_{a}}
   \Big\{
      \sum_{y \in \mathcal{Y}_{A,C}} \omega_y(O; \hat\eta)
      \Big(
         y \Big)
      - \mu^Y_{a}
   \Big\}
   + \frac{\mathbb{I}(A=a)}{\hat{\pi}_{a}}
     \big\{\mu^Y_{a} - \hat{\mu}^Y_{a}\big\}
   + \hat{\mu}^Y_{a} - \psi_a
\Bigg) \\[10pt]
&= \mathbb{E}\!\Bigg(
   \frac{\mathbb{I}(A=a)}{\hat{\pi}_{a}}
   \Big\{
      \sum_{y \in \mathcal{Y}_{A,C}} \omega_y(O; \hat\eta)
      \Big(
         y \Big)
      - \mu^Y_{a}
   \Big\}
+ \mathbb{E}\!\Bigg(
   \frac{\mathbb{I}(A=a)}{\hat{\pi}_{a}}
     \big\{\mu^Y_{a} - \hat{\mu}^Y_{a}\big\}
   + \hat{\mu}^Y_{a} - \mu^Y_{a}
\Bigg) \Big\}\\[10pt]
&= \mathbb{E}\!\Bigg(
   \frac{\mathbb{I}(A=a)}{\hat{\pi}_{a}}
   \Big\{
     \sum_{y \in \mathcal{Y}_{A,C}} \omega_y(O; \hat\eta)
      \Big(
         y \Big)
      - \mu^Y_{a}
   \Big\}
+ \mathbb{E}\!\Bigg(
   \mathbb{E}[\frac{\mathbb{I}(A=a)}{\hat{\pi}_{a}}\mid C]
     \big\{\mu^Y_{a} - \hat{\mu}^Y_{a}\big\}
   + \hat{\mu}^Y_{a} - \mu^Y_{a}
\Bigg)\Big\}\\[10pt]
&= \mathbb{E}\!\Bigg(
   \frac{\mathbb{I}(A=a)}{\hat{\pi}_{a}}
   \Big\{
     \sum_{y \in \mathcal{Y}_{A,C}} \omega_y(O; \hat\eta)
      \Big(
         y \Big)
      - \mu^Y_{a}
   \Big\}
+ \mathbb{E}\!\Bigg(
   \frac{\pi_{a}}{\hat{\pi}_{a}}
     \big\{\mu^Y_{a} - \hat{\mu}^Y_{a}\big\}
   + \hat{\mu}^Y_{a} - \mu^Y_{a}
\Bigg)\Big\}\\[10pt]
&= \mathbb{E}\!\Bigg(
   \frac{\mathbb{I}(A=a)}{\hat{\pi}_{a}}
   \Big\{
      \sum_{y \in \mathcal{Y}_{A,C}} \omega_y(O; \hat\eta)
      \Big(
         y \Big)
      - \mu^Y_{a}
   \Big\}
+ \mathbb{E}\!\Bigg(
   \big\{\frac{\pi_{a}}{\hat{\pi}_{a}}-1\Big\}
     \big\{\mu^Y_{a} - \hat{\mu}^Y_{a}\big\}
\Bigg)\Big\}\\[10pt]
&= \mathbb{E}\!\Bigg(
   \frac{\mathbb{I}(A=a)}{\hat{\pi}_{a}}
   \Big\{
      \sum_{y \in \mathcal{Y}_{A,C}} \omega_y(O; \hat\eta)
      \Big(
         y \Big)
      - \mu^Y_{a}
   \Big\}
+ \mathbb{E}\!\Bigg(
   \big\{\frac{\pi_{a}-\hat{\pi}_{a}}{\hat{\pi}_{a}}\Big\}
     \big\{\mu^Y_{a} - \hat{\mu}^Y_{a}\big\}
\Bigg)\Big\}\\[10pt]
&\;\lesssim\; \mathbb{E}\!\Bigg(
   \frac{\mathbb{I}(A=a)}{\hat{\pi}_{a}}
   \Big\{
      \sum_{y \in \mathcal{Y}_{A,C}} \omega_y(O; \hat\eta)
      \Big(
         y \Big)
      - \mu^Y_{a}
   \Big\}\Bigg)
+
\|\pi_a - \hat{\pi}_a\| \;\|\mu_a^Y - \hat{\mu}_a^Y\|.\\[10pt]
\end{align*}

From Appendix~\ref{app:weights}, we have that

\begin{align*}
&\mathbb{E}\!\bigg[
\Big\{
\sum_{y_i \in \mathcal{Y}} \omega_{y_i}(O; \hat\eta)\, y_i
\Big\}
\;\Big|\; Y = y_j, A = a, C = c
\bigg] \\
&=
\sum_{y_i}
\Bigg\{
\mathbb{E}[h_{W,i}(W; \hat\eta)
\mid y_j,a,c]
\mathbb{E}[h_{Z, i}(
Z;\hat\eta) \mid y_j,a,c]\Bigg\} y_i\\[0.75em]
& \quad
+\sum_{y_i}
\Bigg\{
\mathbb{E}[h_{W,i}(W; \hat\eta)
\mid y_j,a,c]
\mathbb{E}[h_{V, i}(V;\hat\eta) \mid y_j,a,c]\Bigg\} y_i\\[0.75em]
& \quad
+\sum_{y_i}
\Bigg\{
\mathbb{E}[h_{Z,i}(Z; \hat\eta)
\mid y_j,a,c] 
\mathbb{E}[h_{V, i}(
V; \hat\eta) \mid y_j,a,c]\Bigg\} y_i\\[0.75em]
& \quad
-2 \sum_{y_i}\Bigg\{
\mathbb{E}[h_{W,i}(W; \hat\eta)
\mid y_j,a,c]
\mathbb{E}[h_{Z,i}(Z; \hat\eta)
\mid y_j,a,c]
\mathbb{E}[h_{V, i}(
V; \hat\eta) \mid y_j,a,c]\Bigg\} y_i \\
&=
\sum_{y_i}
\Bigg\{
\mathbb{E}[h_{W,i}(W; \hat\eta)
\mid y_j,a,c] - \mathbb{\hat E}[h_{W,i}(W; \hat\eta)
\mid y_j,a,c] + \mathbb{\hat E}[h_{W,i}(W; \hat\eta)
\mid y_j,a,c]\Bigg\}\\[0.75em]
& \qquad \cdot
\Bigg\{
\mathbb{E}[h_{Z, i}(
Z;\hat\eta) \mid y_j,a,c]
- \mathbb{\hat E}[h_{Z, i}(
Z;\hat\eta) \mid y_j,a,c]
+ \mathbb{\hat E}[h_{Z, i}(
Z;\hat\eta) \mid y_j,a,c]\Bigg\} y_i\\[0.75em]
& \quad
+ \sum_{y_i}
\Bigg\{
\mathbb{E}[h_{W,i}(W; \hat\eta)
\mid y_j,a,c] - \mathbb{\hat E}[h_{W,i}(W; \hat\eta)
\mid y_j,a,c] + \mathbb{\hat E}[h_{W,i}(W; \hat\eta)
\mid y_j,a,c]\Bigg\}\\[0.75em]
& \qquad \cdot
\Bigg\{
\mathbb{E}[h_{V, i}(
V;\hat\eta) \mid y_j,a,c]
- \mathbb{\hat E}[h_{V, i}(
V;\hat\eta) \mid y_j,a,c]
+ \mathbb{\hat E}[h_{V, i}(
V;\hat\eta) \mid y_j,a,c]\Bigg\} y_i\\[0.75em]
& \quad
+ \sum_{y_i}
\Bigg\{
\mathbb{E}[h_{Z,i}(Z; \hat\eta)
\mid y_j,a,c] - \mathbb{\hat E}[h_{Z,i}(Z; \hat\eta)
\mid y_j,a,c] + \mathbb{\hat E}[h_{Z,i}(Z; \hat\eta)
\mid y_j,a,c]\Bigg\}\\[0.75em]
& \qquad \cdot
\Bigg\{
\mathbb{E}[h_{V, i}(
V;\hat\eta) \mid y_j,a,c]
- \mathbb{\hat E}[h_{V, i}(
V;\hat\eta) \mid y_j,a,c]
+ \mathbb{\hat E}[h_{V, i}(
V;\hat\eta) \mid y_j,a,c]\Bigg\} y_i\\[0.75em]
& \quad
-2 \sum_{y_i}
\Bigg\{
\mathbb{E}[h_{W,i}(W; \hat\eta)
\mid y_j,a,c] - \mathbb{\hat E}[h_{W,i}(W; \hat\eta)
\mid y_j,a,c] + \mathbb{\hat E}[h_{W,i}(W; \hat\eta)
\mid y_j,a,c]\Bigg\}\\[0.75em]
& \qquad \cdot
\Bigg\{
\mathbb{E}[h_{Z, i}(
Z;\hat\eta) \mid y_j,a,c]
- \mathbb{\hat E}[h_{Z, i}(
Z;\hat\eta) \mid y_j,a,c]
+ \mathbb{\hat E}[h_{Z, i}(
Z;\hat\eta) \mid y_j,a,c]\Bigg\}\\[0.75em]
& \qquad \cdot
\Bigg\{
\mathbb{E}[h_{V, i}(
V;\hat\eta) \mid y_j,a,c]
- \mathbb{\hat E}[h_{V, i}(
V;\hat\eta) \mid y_j,a,c]
+ \mathbb{\hat E}[h_{V, i}(
V;\hat\eta) \mid y_j,a,c]\Bigg\} y_i,
\end{align*}
where
\[
\mathbb{\hat E}[h_{W}(W; \hat\eta)\mid Y=y_j,a,c]
=
\mathbf{\hat M}_W^{-1}(a,c)\mathbf{\hat M}_W(a,c) e_j,
\]
\[
\mathbb{\hat E}[h_{Z}(Z; \hat\eta)\mid Y=y_j,a,c]
=
\mathbf{\hat M}_Z^{-1}(a,c)\mathbf{\hat M}_Z(a,c) e_j,
\]
\[
\text{ and }
\mathbb{\hat E}[h_{V}(V; \hat\eta)\mid Y=y_j,a,c]
=
\mathbf{\hat M}_V^{-1}(a,c)\mathbf{\hat M}_V(a,c) e_j,
\]
so that
\[
\mathbb{\hat E}[h_{W,i}(W; \hat\eta)\mid Y=y_j,a,c]
=
\mathbf 1\{i=j\},
\]
\[
\mathbb {\hat E}[h_{Z,i}(Z; \hat \eta)\mid Y=y_j,a,c]
=
\mathbf 1\{i=j\},
\]
\[ \text{ and }
\mathbb {\hat E}[h_{V,i}(V; \hat \eta)\mid Y=y_j,a,c]
=
\mathbf 1\{i=j\}.
\]

Then by conditions (iii)--(v),
\begin{align*}
&\mathbb{E}\!\bigg[
\Big\{
\sum_{y_i \in \mathcal{Y}} \omega_{y_i}(O; \hat\eta)\, y_i
\Big\}
\;\Big|\; Y = y_j, A = a, C = c
\bigg] \\[0.75em]
&=  
\Bigg\{
\mathbb{E}[h_{W,j}(W; \hat\eta)
\mid y_j,a,c] - \mathbb{\hat E}[h_{W,j}(W; \hat\eta)
\mid y_j,a,c] + 1\Bigg\}\\[0.75em]
& \qquad \cdot
\Bigg\{
\mathbb{E}[h_{Z, j}(
Z;\hat\eta) \mid y_j,a,c]
- \mathbb{\hat E}[h_{Z, j}(
Z;\hat\eta) \mid y_j,a,c]
+ 1\Bigg\} y_j\\[0.75em]
& 
+ 
\Bigg\{
\mathbb{E}[h_{W,j}(W; \hat\eta)
\mid y_j,a,c] - \mathbb{\hat E}[h_{W,j}(W; \hat\eta)
\mid y_j,a,c] + 1\Bigg\}\\[0.75em]
& \qquad \cdot
\Bigg\{
\mathbb{E}[h_{V, j}(
V;\hat\eta) \mid y_j,a,c]
- \mathbb{\hat E}[h_{V, j}(
V;\hat\eta) \mid y_j,a,c]
+ 1\Bigg\} y_j\\[0.75em]
& 
+ 
\Bigg\{
\mathbb{E}[h_{Z,j}(Z; \hat\eta)
\mid y_j,a,c] - \mathbb{\hat E}[h_{Z,j}(Z; \hat\eta)
\mid y_j,a,c] + 1\Bigg\}\\[0.75em]
& \qquad \cdot
\Bigg\{
\mathbb{E}[h_{V, j}(
V;\hat\eta) \mid y_j,a,c]
- \mathbb{\hat E}[h_{V, j}(
V;\hat\eta) \mid y_j,a,c]
+ 1\Bigg\} y_j\\[0.75em]
&
-2 
\Bigg\{
\mathbb{E}[h_{W,j}(W; \hat\eta)
\mid y_j,a,c] - \mathbb{\hat E}[h_{W,j}(W; \hat\eta)
\mid y_j,a,c] + 1\Bigg\}\\[0.75em]
& \qquad \cdot
\Bigg\{
\mathbb{E}[h_{Z, j}(
Z;\hat\eta) \mid y_j,a,c]
- \mathbb{\hat E}[h_{Z, j}(
Z;\hat\eta) \mid y_j,a,c]
+ 1\Bigg\}\\[0.75em]
& \qquad \cdot
\Bigg\{
\mathbb{E}[h_{V, j}(
V;\hat\eta) \mid y_j,a,c]
- \mathbb{\hat E}[h_{V, j}(
V;\hat\eta) \mid y_j,a,c]
+ 1\Bigg\} y_j + o_p(n^{-1/2}).\\
\end{align*}

Applying conditions (iii)--(v) again after canceling terms, we have \[\mathbb{E}\!\bigg[
\Big\{
\sum_{y_i \in \mathcal{Y}} \omega_{y_i}(O; \hat\eta)\, y_i
\Big\}
\;\Big|\; Y = y_j, A = a, C = c
\bigg] = y_j + o_p(n^{-1/2}).\]

Then,
\begin{align*}
R_2 
&\lesssim \mathbb{E}\!\Bigg(
   \frac{\mathbb{I}(A=a)}{\hat{\pi}_{a}}
   \Bigg\{
      \sum_{y \in \mathcal{Y}_{A,C}} \omega_{y}(O; \hat\eta)\, y
      - \mu^Y_{a}
   \Bigg\}
\Bigg) + \lVert\pi_a - \hat{\pi}_a\rVert\;
          \big\|\,\mu_a^Y - \hat{\mu}_a^Y\rVert \\[6pt]
&= \mathbb{E}\!\Bigg(
   \frac{\mathbb{I}(A=a)}{\hat{\pi}_{a}}
   \mathbb{E}\left[\Bigg\{
      \sum_{y \in \mathcal{Y}_{A,C}} \omega_{y}(O; \hat\eta)\, y
      - \mu^Y_{a}
   \Bigg\}\mid Y, A, C\right]
\Bigg) + \lVert\pi_a - \hat{\pi}_a\rVert\;
          \big\|\,\mu_a^Y - \hat{\mu}_a^Y\rVert \\
&=\mathbb{E}\!\Bigg(
   \frac{\mathbb{I}(A=a)}{\hat{\pi}_{a}}
   \{Y - \mu^Y_{a}\}
\Bigg) + \lVert\pi_a - \hat{\pi}_a\rVert\;
          \big\|\,\mu_a^Y - \hat{\mu}_a^Y\rVert + o_p(n^{-1/2})  \\
&= 0 + \lVert\pi_a - \hat{\pi}_a\rVert\;
          \big\|\,\mu_a^Y - \hat{\mu}_a^Y\rVert + o_p(n^{-1/2}). 
\end{align*}

Finally, by (ii), we have that $R_2=o_p(n^{-1/2}).$
\end{proof}

\section{Binary Data-Generating Process}\label{app:binary_dgp}
Confounder \(C\) and treatment \(A\) are generated as follows:
\[
C \sim \mathrm{Bernoulli}(0.3), \quad 
A \mid C \sim 
\begin{cases}
\mathrm{Bernoulli}(0.4), & C=0, \\
\mathrm{Bernoulli}(0.6), & C=1.
\end{cases}
\]
Additionally, outcomes are generated according to a logistic (expit) model:
\[
Y \mid A, C \sim \mathrm{Bernoulli}\!\big(\mathrm{expit}(-1 + 3A - 1.5C)\big).
\]  
Finally, proxies \(Z, W, V\) are independently generated conditional on \(Y\) as follows:
\[
p(Z=1 \mid Y=1) = p(W=1 \mid Y=1) = p(V=1 \mid Y=1) = 0.9,
\]  
\[
p(Z=1 \mid Y=0) = p(W=1 \mid Y=0) = p(V=1 \mid Y=0) = 0.1.
\]  

By construction,
\[
W \perp Z \perp V \mid Y.
\]

We form the empirical three-way array/tensor
\[
\widehat{\mathcal T}(w,z,v)
=
\widehat{p}(W=w,Z=z,V=v).
\]
Under Assumption~\ref{assump:ci_hidden_y},
\[
p(W,Z,V)
=
\sum_{y=0}^1
p(Y=y)
p(W \mid Y=y)
p(Z \mid Y=y)
p(V \mid Y=y).
\]
Eigendecomposition tasks applied to \(\widehat{\mathcal T}(w,z,v)\) discussed in Section~\ref{sec:related_work} exploit this structure to yield estimates of components
\[
\widehat{p}(Y),
\qquad
\widehat{p}(W \mid Y),
\qquad
\widehat{p}(Z \mid Y),
\qquad
\widehat{p}(V \mid Y),
\]
up to labeling of \(Y\).
Note that although these nuisance functions may alternatively be recovered using other estimation strategies, we employ eigendecomposition tasks to avoid iterative optimization procedures and improve computational efficiency. 

Labels are assigned by ordering the recovered latent components according to \(\widehat{p}(W=1 \mid Y)\), since it is strictly increasing in \(Y\) under the data-generating process (option 2 in Assumption~\ref{assump:unique_mapping}). 

The recovered measurement distribution \(\widehat p(W \mid Y)\) is then used together with the observed distribution \(\widehat p(A,W \mid C)\) to recover \(\widehat p(A,Y \mid C)\). Specifically, 
\[
p(A,W \mid C)
=
\sum_y
p(A,Y=y \mid C)\,
p(W \mid Y=y).
\]
Solving this system yields estimate \(\widehat p(A,Y \mid C)\), which determines the outcome regression estimate \(\widehat p(Y \mid A,C)\). 

The propensity score estimate, \(\widehat p(A \mid C)\), is constructed directly from the observed data.  

\section{Mixed DGP 1 (Binary Outcome; Continuous Confounders and Proxies)}\label{app:mixed_dgp1}

Confounder \(C\), treatment \(A\), and binary outcome \(Y\) are generated according to
\[
C \sim \mathcal{N}(0,1);
\]
\[
A \mid C \sim \mathrm{Bernoulli}\!\left(
\operatorname{expit}(-0.3 + 0.8C)
\right);
\]
\[
Y \mid A,C \sim \mathrm{Bernoulli}\!\left(
\operatorname{expit}(\beta_0 + \beta_A A + \beta_C C)
\right),
\]
with \((\beta_0,\beta_A,\beta_C)=(-3,3,0.5)\). The continuous proxies \(W\), \(Z\), and \(V\) are generated independently conditional on \(Y\) according to
\[
W \mid Y=y \sim \mathcal N(-0.3 + 1.5y,\,1),
\]
\[
Z \mid Y=y \sim \mathcal N(0.5 + 1.2y,\,1),
\]
\[
V \mid Y=y \sim \mathcal N(1.0 + 0.8y,\,1).
\]

Thus,
\[
W \perp Z \perp V \mid Y.
\]
We directly model the continuous measurement distribution using a two-component latent Gaussian mixture model. Under Assumption~\ref{assump:ci_hidden_y},
\[
p(W,Z,V)
=
\sum_{y=0}^1
p(Y=y)\,
p(W \mid Y=y)\,
p(Z \mid Y=y)\,
p(V \mid Y=y).
\]

We use Gaussian measurement models
\[
W \mid Y=y \sim \mathcal{N}(\mu_{W,y},\sigma_{W,y}^2),
\]
with analogous models for \(Z\) and \(V\) to estimate parameters
\[
p(Y=y),
\qquad
(\mu_{W,y},\sigma_{W,y}),
\qquad
(\mu_{Z,y},\sigma_{Z,y}),
\qquad
(\mu_{V,y},\sigma_{V,y})
\]
via an EM algorithm maximizing the observed-data log-likelihood
\[
\ell
=
\sum_{i=1}^n
\log
\left[
\sum_{y=0}^1
p(Y_i=y)\,
p(W_i \mid Y_i=y)\,
p(Z_i \mid Y_i=y)\,
p(V_i \mid Y_i=y)
\right].
\]

Since the latent classes are only identified up to label switching, latent outcome labels are determined using the recovered Gaussian measurement distributions \(\{\widehat p(W \mid Y=y): y \in \{0,1\}\}\) via their estimated means \(\widehat\mu_{W,y}\). Specifically, because the conditional mean of \(W\) is strictly increasing in \(Y\) under the data-generating process, satisfying option 2 in Assumption~\ref{assump:unique_mapping}, the recovered distribution with larger estimated mean is identified with \(Y=1\).
Nuisance functions in \(\omega(O;\eta)\) are estimated via the labeled quantities \(\widehat{\mu}_{W,y}\), \(\widehat{\mu}_{Z,y}\), and \(\widehat{\mu}_{V,y}\).

Next, we estimate the outcome regression model
\[
p(Y=1 \mid A,C)
=
\operatorname{expit}(\beta_0+\beta_AA+\beta_CC).
\]
The previously recovered parameters determine the estimated measurement distributions \(\widehat p(W \mid y)\), \(\widehat p(Z \mid y)\), and \(\widehat p(V \mid y)\). We then estimate the outcome regression model \(p(Y=y \mid A,C)\) by maximizing the observed-data log-likelihood
\[
\ell(\beta)
=
\sum_{i=1}^n
\log
\left[
\sum_{y=0}^1
\Pr_\beta(Y_i=y \mid A_i,C_i)
\widehat p(W_i \mid y)\,
\widehat p(Z_i \mid y)\,
\widehat p(V_i \mid y)
\right].
\]

Thus, we first recover the latent Gaussian measurement model \(p(W,Z,V \mid Y)\), which is then used to estimate the hidden outcome regression \(p(Y \mid A,C)\) through the induced observed-data likelihood. Although nuisance functions can be jointly estimated using a single expectation-maximization procedure, we instead use this two-step procedure due to its computational efficiency under this DGP. 

The propensity score \(p(A \mid C)\) is estimated directly from the observed data using a logistic regression model.

\section{Mixed DGP 2 (Three-category Outcome; Continuous Confounders and Proxies)}\label{app:mixed_dgp2}

Confounder \(C\), treatment \(A\), and three-category outcome \(Y\) are generated according to

\[
C \sim \mathcal{N}(0,1);
\]

\[
A \mid C \sim \mathrm{Bernoulli}\!\left(
\operatorname{expit}(-0.3 + 0.8C)
\right);
\]

\[
Y \mid A,C \sim \mathrm{Multinomial}\!\left(
1;
\pi_1(A,C),\pi_2(A,C),\pi_3(A,C)
\right), \qquad Y \in \{0,1,2\},
\]

where

\[
\pi_2(A,C)
=
\frac{
\exp(\beta_{20} + \beta_{2A} A + \beta_{2C} C)
}{
1
+
\exp(\beta_{20} + \beta_{2A} A + \beta_{2C} C)
+
\exp(\beta_{30} + \beta_{3A} A + \beta_{3C} C)
},
\]

\[
\pi_3(A,C)
=
\frac{
\exp(\beta_{30} + \beta_{3A} A + \beta_{3C} C)
}{
1
+
\exp(\beta_{20} + \beta_{2A} A + \beta_{2C} C)
+
\exp(\beta_{30} + \beta_{3A} A + \beta_{3C} C)
},
\]

and

\[
\pi_1(A,C)
=
\frac{
1
}{
1
+
\exp(\beta_{20} + \beta_{2A} A + \beta_{2C} C)
+
\exp(\beta_{30} + \beta_{3A} A + \beta_{3C} C)
}.
\]

We take

\[
(\beta_{20},\beta_{2A},\beta_{2C})=(-1.5,1.2,0.4),
\]

and

\[
(\beta_{30},\beta_{3A},\beta_{3C})=(-3.0,2.4,0.7).
\]

After sampling the latent outcome \(Y\), the continuous proxies \(W\), \(Z\), and \(V\) are generated independently conditional on \(Y\) according to
\[
W \mid Y=y \sim \mathcal N(\mu_{W,y},\sigma_{W,y}^2),
\qquad
Z \mid Y=y \sim \mathcal N(\mu_{Z,y},\sigma_{Z,y}^2),
\qquad
V \mid Y=y \sim \mathcal N(\mu_{V,y},\sigma_{V,y}^2),
\]
with class-specific means
\[
(\mu_{W,1},\mu_{W,2},\mu_{W,3})=(-2.5,\,2.0,\,6.0),
\]
\[
(\mu_{Z,1},\mu_{Z,2},\mu_{Z,3})=(-1.5,\,3.0,\,6.5),
\]
\[
(\mu_{V,1},\mu_{V,2},\mu_{V,3})=(-3.0,\,2.0,\,5.8),
\]
and class-specific standard deviations
\[
(\sigma_{W,1},\sigma_{W,2},\sigma_{W,3})=(0.95, 1.25, 1.35),
\]
\[
(\sigma_{Z,1},\sigma_{Z,2},\sigma_{Z,3})=(0.95, 1.10, 1.20),
\]
\[
(\sigma_{V,1},\sigma_{V,2},\sigma_{V,3})=(0.90, 1.05, 1.15).
\]

Thus,
\[
W \perp Z \perp V \mid Y.
\]
We directly model the continuous measurement distribution using a three-component latent Gaussian mixture model. Under Assumption~\ref{assump:ci_hidden_y},
\[
p(W,Z,V)
=
\sum_{y=1}^3
p(Y=y)\,
p(W \mid Y=y)\,
p(Z \mid Y=y)\,
p(V \mid Y=y).
\]

The latent Gaussian measurement parameters
\[
p(Y=y),
\qquad
(\mu_{W,y},\sigma_{W,y}),
\qquad
(\mu_{Z,y},\sigma_{Z,y}),
\qquad
(\mu_{V,y},\sigma_{V,y})
\]
are estimated via an EM algorithm maximizing the observed-data log-likelihood

\[
\ell
=
\sum_{i=1}^n
\log
\left[
\sum_{y=1}^3
p(Y_i=y)\,
p(W_i \mid Y_i=y)\,
p(Z_i \mid Y_i=y)\,
p(V_i \mid Y_i=y)
\right].
\]

Since the latent classes are only identified up to label switching, latent outcome labels are determined using the recovered Gaussian measurement distributions \(\{\widehat p(W \mid Y=y): y \in \{0,1,2\}\}\) via their estimated means \(\widehat\mu_{W,y}\). Specifically, because the conditional mean of \(W\) is strictly increasing in \(Y\) under the data-generating process, satisfying option 2 in Assumption~\ref{assump:unique_mapping}, the recovered distribution with the smallest estimated mean is identified with \(Y=1\), the intermediate estimated mean with \(Y=2\), and the largest estimated mean with \(Y=3\). Nuisance functions in \(\omega(O;\eta)\) are estimated via the labeled quantities \(\widehat{\mu}_{W,y}\), \(\widehat{\mu}_{Z,y}\), \(\widehat{\mu}_{V,y}\), \(\widehat{\sigma}_{W,y}\), \(\widehat{\sigma}_{Z,y}\), and \(\widehat{\sigma}_{V,y}\).

Next, we estimate the outcome regression model
\[
p(Y=y \mid A,C),
\qquad y\in\{0,1,2\}.
\]

The previously recovered parameters determine the estimated measurement distributions \(\widehat p(W \mid y)\), \(\widehat p(Z \mid y)\), and \(\widehat p(V \mid y)\). We then estimate the outcome regression model \(p(Y=y \mid A,C)\) by maximizing the observed-data log-likelihood
\[
\ell(\beta)
=
\sum_{i=1}^n
\log
\left[
\sum_{y=1}^3
\Pr_\beta(Y_i=y \mid A_i,C_i)
\widehat p(W_i \mid y)\,
\widehat p(Z_i \mid y)\,
\widehat p(V_i \mid y)
\right].
\]

We then use the estimated parameters \(\widehat{\beta}\) to obtain
\(\widehat{\mathbb{E}}[Y \mid A,U]\).

The propensity score \(p(A \mid C)\) is estimated directly from the observed data using a logistic regression model.

\end{document}